\newcommand{\openone}{\leavevmode\hbox{\small1\normalsize\kern-.33em1}} 
\def\UrlSpecials{\do\~{\kern -.15em\lower .7ex\hbox{~}\kern .04em}} \catcode`~=13 
\newcommand{\calE}{\mathcal{E}}
\newcommand{\calO}{\mathcal{O}}
\newcommand{\calT}{\mathcal{T}}
\newcommand{\ba}{\mathbf{a}}
\newcommand{\bA}{\mathbf{A}}
\newcommand{\bb}{\mathbf{b}}
\newcommand{\bG}{\mathbf{G}}
\newcommand{\bh}{\mathbf{h}}
\newcommand{\bH}{\mathbf{H}}
\newcommand{\bI}{\mathbf{I}}
\newcommand{\bM}{\mathbf{M}}
\newcommand{\bs}{\mathbf{s}}
\newcommand{\bS}{\mathbf{S}}
\newcommand{\bu}{\mathbf{u}}
\newcommand{\bv}{\mathbf{v}}
\newcommand{\bw}{\mathbf{w}}
\newcommand{\bx}{\mathbf{x}}
\newcommand{\by}{\mathbf{y}}
\newcommand{\bz}{\mathbf{z}}
\newcommand{\rma}{\mathrm{a}}
\newcommand{\rmd}{\mathrm{d}}
\newcommand{\bbN}{\mathbb{N}}
\newcommand{\bbP}{\mathbb{P}}
\newcommand{\bbR}{\mathbb{R}}
\DeclareMathAlphabet{\mathbsf}{OT1}{cmss}{bx}{n}
\DeclareMathAlphabet{\mathssf}{OT1}{cmss}{m}{sl}
\DeclareSymbolFont{bsfletters}{OT1}{cmss}{bx}{n}  
\DeclareSymbolFont{ssfletters}{OT1}{cmss}{m}{n}
\DeclareMathSymbol{\bsfGamma}{0}{bsfletters}{'000}
\DeclareMathSymbol{\ssfGamma}{0}{ssfletters}{'000}
\DeclareMathSymbol{\bsfDelta}{0}{bsfletters}{'001}
\DeclareMathSymbol{\ssfDelta}{0}{ssfletters}{'001}
\DeclareMathSymbol{\bsfTheta}{0}{bsfletters}{'002}
\DeclareMathSymbol{\ssfTheta}{0}{ssfletters}{'002}
\DeclareMathSymbol{\bsfLambda}{0}{bsfletters}{'003}
\DeclareMathSymbol{\ssfLambda}{0}{ssfletters}{'003}
\DeclareMathSymbol{\bsfXi}{0}{bsfletters}{'004}
\DeclareMathSymbol{\ssfXi}{0}{ssfletters}{'004}
\DeclareMathSymbol{\bsfPi}{0}{bsfletters}{'005}
\DeclareMathSymbol{\ssfPi}{0}{ssfletters}{'005}
\DeclareMathSymbol{\bsfSigma}{0}{bsfletters}{'006}
\DeclareMathSymbol{\ssfSigma}{0}{ssfletters}{'006}
\DeclareMathSymbol{\bsfUpsilon}{0}{bsfletters}{'007}
\DeclareMathSymbol{\ssfUpsilon}{0}{ssfletters}{'007}
\DeclareMathSymbol{\bsfPhi}{0}{bsfletters}{'010}
\DeclareMathSymbol{\ssfPhi}{0}{ssfletters}{'010}
\DeclareMathSymbol{\bsfPsi}{0}{bsfletters}{'011}
\DeclareMathSymbol{\ssfPsi}{0}{ssfletters}{'011}
\DeclareMathSymbol{\bsfOmega}{0}{bsfletters}{'012}
\DeclareMathSymbol{\ssfOmega}{0}{ssfletters}{'012}
\newcommand{\bmeta}{\bm{\eta}}
\DeclareMathOperator*{\argmax}{arg\,max}
\newtheorem{theorem}{Theorem} 
\newtheorem{lemma}[theorem]{Lemma}
\newtheorem{corollary}[theorem]{Corollary}
\newtheorem{definition}{Definition}
\def\BibTeX{{\rm B\kern-.05em{\sc i\kern-.025em b}\kern-.08em
    T\kern-.1667em\lower.7ex\hbox{E}\kern-.125emX}}
\begin{document}

\title{On the Error Exponent of Approximate Sufficient Statistics for $M$-ary Hypothesis Testing} 


\IEEEoverridecommandlockouts
\author{\IEEEauthorblockN{Jiachun Pan\IEEEauthorrefmark{1}, Yonglong Li\IEEEauthorrefmark{1}, Vincent Y.~F.~Tan\IEEEauthorrefmark{1}, and Yonina C. Eldar\IEEEauthorrefmark{2}}   \thanks{This work is partially funded by an NRF Fellowship (R-263-000-D02-281). The paper was presented in part at the 2020 International Symposium on Information Theory (ISIT).}   

\vspace{0.5em}

\IEEEauthorblockA{\IEEEauthorrefmark{1}Department of Electrical and Computer Engineering, National University of Singapore}\\
\IEEEauthorblockA{\IEEEauthorrefmark{2}Faculty of Mathematics and Computer Science, Weizmann Institute of Science, Israel}\\
\IEEEauthorblockA{Emails: \url{pan.jiachun@u.nus.edu}, \url{elelong@nus.edu.sg}, \url{vtan@nus.edu.sg}, \url{yonina.eldar@weizmann.ac.il}}
}

\maketitle

\begin{abstract}
Consider the problem of detecting one of $M$ i.i.d.\ Gaussian signals corrupted in white Gaussian noise. Conventionally, matched filters are used for detection. We first show that the outputs of the matched filter form a set of asymptotically optimal sufficient statistics in the sense of maximizing the error exponent of detecting the true signal. In practice, however, $M$ may be large which motivates the design and analysis of a reduced set of $N$ statistics which we term {\em approximate sufficient statistics}. Our construction of these statistics is based on a small set of filters that project the outputs of the matched filters onto a lower-dimensional vector using a sensing matrix. We consider a sequence of sensing matrices that has the desiderata of  row orthonormality  and low coherence. We analyze the performance of the resulting maximum likelihood (ML) detector, which leads to an achievable bound on the error exponent based on the approximate sufficient statistics; this bound recovers the original error exponent when $N=M$. We compare this to a bound that we obtain by analyzing a modified form of the Reduced Dimensionality Detector (RDD) proposed by Xie, Eldar, and Goldsmith [IEEE Trans.\ on Inform.\ Th., 59(6):3858-3874, 2013]. We show that by setting the sensing matrices to be column-normalized group Hadamard matrices, the exponents derived are ensemble-tight, i.e., our analysis is tight on the exponential scale given the sensing matrices and the decoding rule. Finally, we derive some properties of the exponents, showing, in particular, that they increase linearly in the compression ratio $N/M$.
\end{abstract}

\begin{IEEEkeywords}
Error exponent, Approximate sufficient statistic, $M$-ary hypothesis testing, Group Hadamard matrices.
\end{IEEEkeywords}

\section{Introduction}
\label{subsec:motivation}
Consider the scenario in which we would like to detect an unknown signal $s_i(t)$ taking on one of $M$ distinct possibilities $\{s_i(t): i= 1,2,\ldots, M\}$ with equal probability. Measurement inaccuracies result in  noisy observations given by the received signal $y(t) = s_i(t) + z(t)$. To detect the signal, one can pose this problem as an $M$-ary hypothesis test. When the noise is white and Gaussian and the signals have equal energy, it is  well  known \cite{vanTrees} that the optimal detector is a {\em matched filter} in which one first takes the inner product of $y(t)$ with each of the hypothesized signals $s_i(t)$ and then chooses the index $i$ with largest inner product. 
The set of inner products, which constitutes a set of {\em sufficient statistics} in the case of equal-energy signals, together with the aforementioned decoding rule, minimizes the probability of detecting an incorrect hypothesis. 

In practice, $M$ may   be prohibitively large, which implies that many inner products need to be computed. For example, one can imagine that we would like to learn the particular category that an image belongs to. The true category is one out of a large number of categories $M$. We observe  a coarsely subsampled set of its pixels, the number of which is the length of the discrete-time signal $s_i$. This motivates a new class of detectors---one that computes a set of {\em approximate sufficient statistics} of {\em reduced dimensionality}. Instead of using the vector of $M$ inner products, we consider $N\le M$ judiciously computed inner products and base our decision solely on this smaller set of statistics~\cite{Xie2013,eldar}. A few natural questions beckon. First, how can we compress the vector of length $M$ into one of length $N$ such that sufficient information is preserved for reliable   detection? Second, what are the fundamental limits of doing so?  

\subsection{Related Works} \label{sec:related}
As shown in~\cite{Xie2013,eldar}, this problem shares close similarities to the vast literature on compressed sensing and sparse signal recovery~\cite{eldar2012compressed}. This is because, as we show in Section~\ref{sec:problem formulation},  one can formulate our problem in terms of the recovery of the non-zero location of a length-$M$ $1$-sparse vector, where the location of the non-zero element indicates which hypothesis is active. There are numerous works that study the information-theoretic limits of sparse signal recovery. For example, Wainwright~\cite{wainwright} and Wang, Wainwright, and Ramchandran \cite{wainwright2} derived sufficient and necessary conditions for exact support recovery. 
Reeves and Gastpar~\cite{Reeves} showed that recovery with an arbitrarily small but constant fraction of errors is possible and that in some cases computationally simple estimators are near-optimal. 
Tulino {\em et al.}~\cite{tulino} studied   sparse support recovery when the measurement matrix satisfies a certain ``freeness'' condition rather than having i.i.d.\ entries.
Scarlett and Cevher \cite{Scarlett} provided general achievability and converse bounds characterizing the trade-off between the error probability and number of measurements. 
However, most of the existing works assume that the measurements are corrupted by white Gaussian noise. The chief difference is that in our setting, due to the reduction in the dimensionality of the vector formed by the matched filter, the effective noise is no longer white Gaussian; this complicates the problem significantly and requires the adaptation of new analytical techniques so that the resultant error probability is readily analyzable. 

 Our problem is similar to that of Xie, Eldar, and Goldsmith~\cite{Xie2013}. In~\cite{Xie2013}, the authors were concerned with detecting the identities of active users within a general fading framework by using a so-called {\em Reduced Dimensionality Detector} (RDD), which incorporates a subspace projection with a thresholding strategy. The RDD allows for the detection of multiple ``active" signals corresponding to the detection of a $k$-sparse vector for $k\geq 1$. The scheme that the authors proposed results in the error probability decaying polynomially in  the total number of users $M$. In this paper, we are mainly concerned with establishing {\em fundamental limits} for a simpler setting---namely, the detection of a $1$-sparse signal (as only one hypothesis is true). Different from the setting in~\cite{Xie2013} in which the signals are deterministic, in this paper, motivated by optimal codes for transmission over Gaussian channels,  we consider {\em random} signals that are generated according to  product Gaussian distributions.
In addition, since we assume that the hypotheses are equiprobable, the maximum likelihood decoder is optimal and we   show that its error probability decays exponentially fast in $M$.
We also provide  analysis leading to a tighter bound of the error probability of a modified version of the RDD of~\cite{Xie2013} for the detection of a $1$-sparse signal within our specific framework (in which the underlying signals are Gaussian). This error probability also decays exponentially fast. 
	
Finally, we mention that Hayashi and Tan~\cite{hayashi} also considered the fundamental limits of approximate sufficient statistics from the viewpoint of reconstructing parametric families of distributions. However, rather than focusing on a hypothesis testing-centric problem,~\cite{hayashi} considered the problem of approximating  parametric distributions with small average Kullback-Leibler (KL) or total variation (TV) ``distance'' from given distributions based on a compressed version of the sufficient statistics.

\subsection{Summary of Main Contributions}
There are three main contributions in this paper.
\begin{itemize}
\item First, by leveraging the ideas of \cite{Xie2013,Eldar2009}, we construct a set of approximate sufficient statistics for our detection problem. To do so, we  judiciously select an appropriate family of sensing matrices, one that is efficient in compressing the underlying data and yet amenable to error exponent analysis. The resultant noise upon compression is no longer white and Gaussian. We choose the sensing matrices to be column-normalized group Hadamard matrices because of their row orthonormality and low coherence properties as shown by Thill and Hassibi in~\cite{thill}. 
\item Second, we analyze the performance of this approximate sufficient statistic under the maximum likelihood (ML) decoding strategy by leveraging ideas from Gallager's  error exponent (reliability function) analysis for channel coding \cite{gallager}. We obtain an achievable bound on the error probability and error exponent; this bound recovers the original error exponent (and hence is tight) when $N=M$, i.e., when there is no compression.  Moreover, we prove that the achievable  bound is ensemble-tight when the sensing matrix is a column-normalized group Hadamard matrix---this means that on the exponential scale, there is no looseness in our analyses. The error exponent increases linearly with compression rate $\alpha$ (i.e., the ratio $N/M$) when $\alpha$ is small.
\item Finally, we obtain an ensemble-tight achievable error exponent for a modified version of the  RDD~\cite{Xie2013} for the detection of $1$-sparse signals.  We show that the error exponent of the modified RDD is strictly smaller than that of the ML decoder, but the former is more computationally tractable.
\end{itemize}
\subsection{Paper Outline}
The rest of the paper is structured as follows. The problem is formulated in  Section~\ref{sec:problem formulation} in which we introduce  the notion of approximate sufficient statistics and various detectors. In Section~\ref{sec:main results}, we state our main results on achievable error exponents based on approximate sufficient statistics and discuss their implications. In Section~\ref{sec:proof}, we provide the proofs of our main results. We conclude and present some open problems in Section~\ref{sec:conclusion}. Proofs of more technical derivations are relegated to the appendices.

\section{Approximate Sufficient Statistics} 
\label{sec:problem formulation}
We assume that there are $M\ge 2$ hypotheses where $M$ is large. Under hypothesis $i \in \{1,\ldots, M\}$, the transmitted  real   signal is $s_i(t)$ where $t$ ranges over a set of discrete times $\calT$. Our observations under hypothesis $i$ are given by the signal
\begin{equation}
\label{eqn:receive}
y(t)=s_i(t)+z(t), \quad t\in \calT,
\end{equation}
where $\{z(t):t\in \calT\}$ is zero-mean white Gaussian noise with variance $\sigma^2>0$. The assumption that the noise is white and Gaussian is motivated from practical communication scenarios in which one assumes that the underlying signal is corrupted by an additive white Gaussian noise (AWGN) channel. For the sake of tractability, we assume that the set of times in which we observe $s_i(t)$---namely the set $\calT$---is the finite set $\{1,2,\ldots, T\}$ (and $T$ is allowed to grow with~$M$). Motivated by optimal coding schemes over an AWGN channel \cite[Chap.~7]{Cov06}, we assume each signal $\{s_i(t): t \in\calT\}$ is independently generated from {\color{black}the product Gaussian distribution $\prod_{t \in \calT} \mathcal{N}(s_i(t); 0, \calE^2)$.} Thus, $\calE^2/\sigma^2$ can be regarded as the {\em signal-to-noise ratio} and will henceforth be denoted by $\mathrm{SNR}=\calE^2/\sigma^2$. 
We can write~\eqref{eqn:receive} as
\begin{align}
\label{eqn:y}
y(t)=\sum_{i=1}^M b_i s_i(t) +z(t), \quad t\in \mathcal{T}
\end{align}
where the $\{0,1\}$-valued vector $\mathbf{b} = [b_1, b_2,\ldots, b_M]$ has $\ell_0$ norm given by $1$, i.e., it is $1$-sparse, as only one hypothesis is in effect.  Our goal is to detect the true transmitted signal given the observations $\{y(t):t\in\mathcal{T}\}$. 


\subsection{Matched Filter (MF) and Asymptotically Optimal Sufficient Statistics }

 Conventionally \cite{vanTrees}, one applies a matched filter before making a decision based on the outputs of the filter. The matched filter computes the inner product $\langle y(\cdot), s_i(\cdot)\rangle$ for each $i$ and chooses the index with the largest inner product as the decoding strategy. These inner products can be collated within the $M$-dimensional vector
\begin{align*}
	\bv:=[\langle y(\cdot),s_1(\cdot)\rangle, \langle y(\cdot),s_2(\cdot)\rangle,\cdots,\langle y(\cdot),s_M(\cdot)\rangle]^T \in \mathbb{R}^M.
\end{align*}
Using the received signal model (\ref{eqn:y}), $\bv$ can be written as
\begin{equation}
	\label{eqn:matched filter}
	\bv=\bG\bb+\bw,
\end{equation}
where $\mathbf{G} \in \bbR^{M\times M}$ is the  {\em Gram matrix} of the signals with its $(i,j)$-element defined as 
$[\bG]_{ij}=\langle s_i(\cdot),s_j(\cdot)\rangle$, $\bw=[w_1,w_2,\ldots,w_M]^{T}$, $\bv=[v_1,v_2,\ldots,v_M]^T , v_i=\langle y(\cdot),s_i(\cdot)\rangle$, and $w_i=\langle s_i(\cdot), z(\cdot)\rangle$.  
	
This filtering and subsequent decoding strategy is optimal in the sense of minimizing the probability of error~\cite{vanTrees} when the noise is additive white Gaussian and the {\em signals have equal energy}. In this case, the matched filter produces a vector of $M$ scalars $\{\langle y(\cdot),s_i(\cdot)\rangle\}_{i=1}^M$ that constitutes a set of sufficient statistics. However, when we assume the signals are independently generated from   product Gaussian distributions, the signals do not have equal energy almost surely, which means the vector produced by the matched filter is, in general, not a set of sufficient statistics.
In Theorem~\ref{prop:base}, however, we prove that $\bv$ constitutes a set of {\em asymptotically optimal sufficient statistics} in the sense that its error exponent (based on choosing its largest index $\argmax_j v_j$) is the same as that for the {\em optimal} decoding strategy, namely, the {\em maximum likelihood} decoder (i.e., the one that declares that the true hypothesis is $\argmax_j\bbP(\by\mid j)$).
 We define the probability of error when the matched filter is used as 
\begin{equation*}
P_{\mathrm{MF}}(\mathrm{err}):=\bbP\left(\argmax_{j\in\{1,\cdots,M\}} v_j \neq 1\right),
\end{equation*}
where the subscript ``MF'' stands for ``matched filter'', i.e., the error probability based on choosing the largest element of the vector $\bv$. 

\begin{figure}[t]
	\centering
	\begin{tikzpicture}[scale=0.7]
			\node (s) at (0,-2) {$s_{1}(t)$};
			\node (w) at (2, -3.5) {$z(t)$};
			\node (oplus) at (2, -2) [inner sep = 0pt] {$\displaystyle\bigoplus$};
			\draw [-latex] (s) -- (oplus);
			\draw [-latex] (w) -- (oplus);
			\node (y) at (4,-2)  {$y(t)$};
			\draw [-latex] (oplus) -- (y);
			
			\node (otimes1) at (6,0) [inner sep = 0pt] {$\displaystyle\bigotimes$};
			\node (otimes2) at (6,-2) [inner sep = 0pt] {$\displaystyle\bigotimes$};
			\node (otimesm) at (6,-4.5) [inner sep = 0pt] {$\displaystyle\bigotimes$};
			\draw [-latex]  (5,-2) -- (5,0) -- (otimes1);
			\draw [-latex] (y) -- (otimes2);
			\draw [-latex] (5,-2) -- (5,-4.5) -- (otimesm);
			\node (s1) at (6,-1.1) {$s_{1}(t)$};
			\node (s2) at (6,-3.1) {$s_{2}(t)$};
			\node (sm) at (6,-5.6) {$s_{M}(t)$};
			\draw [-latex] (s1) -- (otimes1);
			\draw [-latex] (s2) -- (otimes2);
			\draw [-latex] (sm) -- (otimesm);
			\path (s2) -- (otimesm)  node [midway, sloped] {$\dots$};
			
			\node[draw] (sum1) at (8,0) {$\sum_{t=1}^T$};
			\node[draw] (sum2) at (8,-2) {$\sum_{t=1}^T$};
			\node[draw] (summ) at (8,-4.5) {$\sum_{t=1}^T$};
			\draw [-latex] (otimes1) -- (sum1);
			\draw [-latex] (otimes2) -- (sum2);
			\draw [-latex] (otimesm) -- (summ);
			\node (v1) at (10,0) {$v_1$};
			\node (v2) at (10,-2) {$v_2$};
			\node (vm) at (10,-4.5) {$v_M$};
			\draw [-latex] (sum1) -- (v1);
			\draw [-latex] (sum2) -- (v2);
			\draw [-latex] (summ) -- (vm);
			\path (v2) -- (vm)  node [midway, sloped] {$\dots$};
%
%
%
%
%
%
%
%
	\end{tikzpicture}
	\caption{The structure of the matched filter.}
	\label{Fig.illustrate1}
\end{figure}
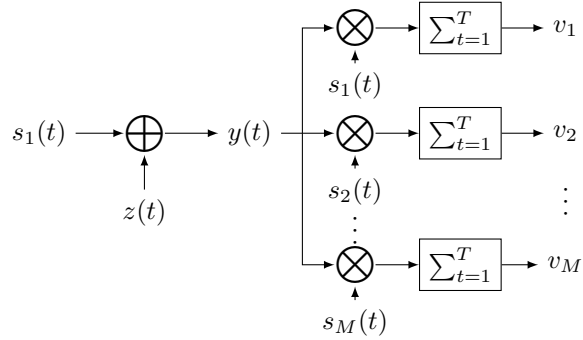

 \subsection{Approximate Sufficient Statistics}
 Motivated by the asymptotic optimality of the matched filter for the Gaussian signals $\{s_i(\cdot)\}_{i=1}^{M}$ (Theorem~\ref{prop:base}) and the analytical tractability of $\bv$ (since the signals $\{s_i(\cdot)\}_{i=1}^{M}$ are i.i.d.\ Gaussian), we now consider a low-dimensional projection of $\bv$ using ideas from compressive sensing and filter design. This lower-dimensional vector that is formed from correlating $y(\cdot)$ with $N\leq M$ filters $\{h_i(\cdot)\}_{i=1}^N$ leads to the notion of   {\em approximate sufficient statistics.}
The structure of the approximate sufficient statistics is illustrated in Figure~\ref{Fig.illustrate2}.
Following Eldar~\cite{Eldar2009}, the $N\le M$ correlating signals $\{h_i(t) : t\in\calT\}_{i=1}^N$ are chosen based on  the set of {\em biorthogonal signals} $\{\hat{s}_j(t) : t\in\calT\}_{j=1}^M$. Each biorthogonal signal $\hat{s}_i(t)$ is  defined as
\begin{equation*}
\hat{s}_j(t)=\sum_{l=1}^M[\bG^{-1}]_{jl} s_l(t), \quad 1\leq j \leq M.
\end{equation*}
The biorthogonality property implies that $\langle s_i(\cdot), \hat{s}_j(\cdot)\rangle=\delta_{ij}$ for all $i,j\in\{1,2,\ldots,M\}$. We remark that in discrete time, to construct the biorthogonal signals, we require $\mathbf{G}$ to be invertible \cite[Thm.~7.2.10]{Horn}; 
this necessitates that $T\ge M$.  We then choose the correlating signals to be
\begin{equation*}
\label{h}
h_i(t)=\sum_{j=1}^{M} a_{ij} \hat{s}_j(t), \quad 1\leq i \leq N
\end{equation*}
where $a_{ij}$ are some coefficients to be judiciously designed. Denote the $N\times M$ matrix $\bA$ with entries $[\bA]_{ij}=a_{ij}$ as the {\em sensing matrix}. The output of the $m$-th correlator is given by
\begin{align}
\label{eqn:um}
u_m&=\langle h_m(\cdot),y(\cdot)\rangle\nonumber\\*
&=\left\langle\sum_{j=1}^Ma_{mj}\hat{s}_j(\cdot), \sum_{i=1}^M b_i s_i(\cdot)\right\rangle+ \left\langle\sum_{j=1}^M a_{mj} \hat{s}_j(\cdot), z(\cdot)\right\rangle\nonumber\\*
&=\sum_{j=1}^M a_{mj}b_j +\eta_m,
\end{align}
where the effective output noise, which is no longer white, is given by 
\begin{equation}
\eta_m=\sum_{j=1}^M a_{mj}\langle\hat{s}_j(\cdot),z(\cdot)\rangle.\label{eqn:eta_m}
\end{equation}
 Defining the lower-dimensional vector $\bu=[u_1,u_2,\ldots,u_N]^T\in\mathbb{R}^N$ and the effective noise vector $\bmeta=[\eta_1,\eta_2,\ldots,\eta_N]^T\in\mathbb{R}^N$, we can express (\ref{eqn:um})  as
\begin{equation}
\label{eqn:ass}
\bu=\bA\bb+\bmeta.
\end{equation}
The new vector $\mathbf{u}$ of dimension $N\le M$ is an {\em approximate} sufficient statistic because $\bu$ can be viewed as a linear projection of the vector of asymptotically optimal sufficient statistic $\bv$ onto a low-dimensional subspace. Indeed, from~\eqref{eqn:ass}, the definition of $\eta_m$ in~\eqref{eqn:eta_m}, the definition of $\bv$ in~\eqref{eqn:matched filter},  and $w_i=\langle s_i(\cdot), z(\cdot)\rangle$,   we have
\begin{align*}
\bu & = \bA\bG^{-1}\bG \mathbf{b} + \bA\bG^{-1}\bw \\
&=\bA\bG^{-1} (\bG\mathbf{b}+ \bw)=\bA\bG^{-1}\bv.
\end{align*}
Thus, the above operations are in fact tantamount to compressing $\bv\in\bbR^M$ into $\bu\in\bbR^N$ and attempting to evaluate the tradeoff between the amount of compression $N/M$ versus the reliability as measured by the error exponent in decoding the true hypothesis given $\bv$ or $\bu$. 

We recall that we assume equal probability for each signal (or hypothesis). To detect the   signal based on the approximate sufficient statistic $\bu$, we will use the maximum likelihood detector, which minimizes the error probability given $\bu$. 
\subsubsection{Maximum Likelihood Detector}
\label{mld}
We define the maximum likelihood detector when given the approximate sufficient statistic $\mathbf{u}$ as
\begin{align}
\label{eqn:mld}
\phi_{\mathrm{ML}}(
\bu)=\argmax_{k\in\{1,2,\ldots,M \}}\bbP(\bu\mid k).
\end{align}
Here, with a slight abuse of notation, we denote the conditional density of observing $\bu$ in~\eqref{eqn:ass} given that hypothesis $k\in\{1,\ldots,M\}$ is in effect as $\mathbb{P}(\mathbf{u}\mid k)$. {\color{black} Since we assume that the signals $\{s_i(\cdot)\}_{i=1}^M$ and thus the Gram matrix $\bG$ are random, given a fixed hypothesis $k$, the distribution of $\bu$ cannot be expressed in closed-form and is computationally expensive to calculate. Indeed, by the law of total probability, it can be written as
\begin{align}
\label{eqn:totalpro}
\mathbb{P}(\mathbf{u}\mid k)=\int \mathbb{P}(\mathbf{u}\mid \bG,k)\mathbb{P}(\bG)\,\mathrm{d}\bG,
\end{align}
where
\begin{align}
\label{eqn:gaussian}
\mathbb{P}(\mathbf{u}\mid \bG,k)=\frac{\exp\{- \frac{1}{2}(\bu-\ba_k)^T(\sigma^2\bA\bG^{-1}\bA^T)^{-1}(\bu-\ba_k)\}}{\sqrt{(2\pi)^N\sigma^2|\bA\bG^{-1}\bA^T|}},
\end{align}
and 
the probability distribution function of the Gram matrix $\bG$ (i.e., $\bbP(\bG)$) is a Wishart distribution which we further elaborate on  in Section~\ref{sec:preli}. The distribution of $\bu$ given $\bG$ and the hypothesis $k$ is given as in~\eqref{eqn:gaussian} because it is Gaussian with mean $\ba_k$ and covariance matrix $\sigma^2\bA\bG^{-1}\bA^T$.}
In this case, the probability of detecting the underlying hypothesis (which recall is hypothesis $1$) incorrectly is 
\begin{equation*}
P_{\mathrm{a}}(\text{err}):=\bbP\left(\phi_{\mathrm{ML}}(\bu) \neq 1\right),
\end{equation*}
where the subscript ``a'' stands for ``approximate sufficient statistic''.

\subsubsection{Modified Reduced Dimensionality Detector (RDD) of Xie, Eldar, and Goldsmith \cite{Xie2013}}
We now introduce an alternative decoding strategy which is motivated by the so-called {\em Reduced Dimensionality Detector} of Xie, Eldar, and Goldsmith~\cite{Xie2013} and is given by the recipe
\begin{align}
\label{rdd}
\phi_{\mathrm{RDD}}(\bu)=\argmax_{k\in\{1,2,\ldots,M\}} |\mathbf{a}_k^T \mathbf{u}|,
\end{align}
where $\mathbf{a}_k$ is the $k$-th column of sensing matrix $\bA$. Note that in the problem setting of~\cite{Xie2013}, $b_i\in\{-1,0,1\}$ and to detect the active users (corresponding to the indices $i$ such that $b_i\in\{-1,1\}$), the absolute value in~\eqref{rdd} is necessary. In our setting, however, $b_i\in\{0,1\}$ and since we are trying to detect the hypothesis whose $b_i$ is equal to $1$, we design a   {\em modified} version of the RDD as follows:
\begin{align}
\label{rdd-like}
\phi_{\mathrm{mRDD}}(\bu)=\argmax_{k\in\{1,2,\ldots,M\}} \mathbf{a}_k^T \mathbf{u}.
\end{align}
Note that the absolute value operator is removed. 
We term this detector as a {\em Modified RDD} and the subscript ``mRDD" in~\eqref{rdd-like} reflects this terminology. The geometric interpretation of Modified RDD is that it is a linear detector that projects $\bu$ onto the subspace formed by the columns of $\bA$. The average error probability of Modified RDD is
\begin{align*}
P_{\mathrm{mRDD}}(\mathrm{err}):=\bbP\left(\phi_{\mathrm{mRDD}}(\bu)\neq 1\right).
\end{align*}

\subsection{Problem Statement}
In the following, we compare $P_{\mathrm{MF}}(\text{err})$ and the error probabilities of the two other proposed detectors $P_{\mathrm{a}}(\text{err})$ and $P_{\mathrm{mRDD}}(\text{err})$. For the sake of analytical tractability, and as conventionally done in information theory~\cite{gallager}, we compare these probabilities on the exponential scale as $M$ and $N$, which are assumed to scale linearly with each other, tend to infinity.  Thus, we compare the exponential decay rates (also called error exponents) corresponding to the various probabilities. 

{\color{black}When the prior probabilities for each hypothesis are equal and the maximum likelihood detector is used, we obtain the smallest probability of error based on the approximate sufficient statistics $\bu$.} Hence the comparison between $P_{\mathrm{MF}}(\text{err})$ and $P_{\mathrm{a}}(\text{err})$ demonstrates the degradation of the error probability (or error exponent) as a function of the amount of compression, quantified by $\alpha$. {\color{black}However, as explained in Section~\ref{mld}, computing $\bbP(\bu\mid k)$ is computationally expensive due to the randomness of the matrix $\bG$.} In contrast, computing $\phi_{\mathrm{mRDD}}$ in~\eqref{rdd-like} is computationally tractable as all that is needed is to take the inner product of $\bu$ with each of the $\ba_k$'s and then choose the maximum. Thus, by comparing $P_{\mathrm{a}}(\text{err})$ to $P_{\mathrm{mRDD}}(\text{err})$ again on the exponential scale, we are able to ascertain the degradation in the error probability (or the error exponent) when the more computationally efficient decoder---namely, the modified RDD---is employed.

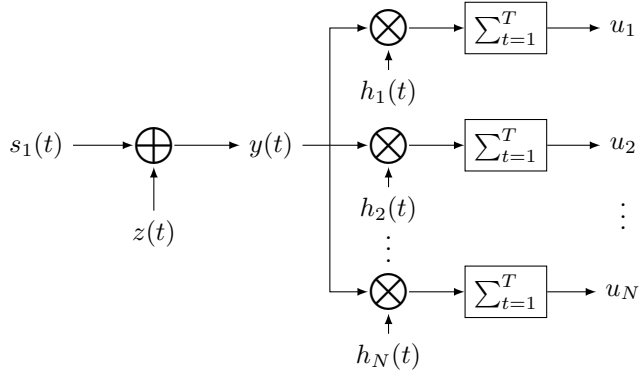
\begin{figure}[t]
	\centering
		\begin{tikzpicture}[scale=0.78]
		\node (s) at (0,-2) {$s_{1}(t)$};
		\node (w) at (2, -3.5) {$z(t)$};
		\node (oplus) at (2, -2) [inner sep = 0pt] {$\displaystyle\bigoplus$};
		\draw [-latex] (s) -- (oplus);
		\draw [-latex] (w) -- (oplus);
		\node (y) at (4,-2)  {$y(t)$};
		\draw [-latex] (oplus) -- (y);
		
		\node (otimes1) at (6,0) [inner sep = 0pt] {$\displaystyle\bigotimes$};
		\node (otimes2) at (6,-2) [inner sep = 0pt] {$\displaystyle\bigotimes$};
		\node (otimesm) at (6,-4.5) [inner sep = 0pt] {$\displaystyle\bigotimes$};
		\draw [-latex]  (5,-2) -- (5,0) -- (otimes1);
		\draw [-latex] (y) -- (otimes2);
		\draw [-latex] (5,-2) -- (5,-4.5) -- (otimesm);
		\node (s1) at (6,-1.1) {$h_{1}(t)$};
		\node (s2) at (6,-3.1) {$h_{2}(t)$};
		\node (sm) at (6,-5.6) {$h_{N}(t)$};
		\draw [-latex] (s1) -- (otimes1);
		\draw [-latex] (s2) -- (otimes2);
		\draw [-latex] (sm) -- (otimesm);
		\path (s2) -- (otimesm)  node [midway, sloped] {$\dots$};
		
		\node[draw] (sum1) at (8,0) {$\sum_{t=1}^T$};
		\node[draw] (sum2) at (8,-2) {$\sum_{t=1}^T$};
		\node[draw] (summ) at (8,-4.5) {$\sum_{t=1}^T$};
		\draw [-latex] (otimes1) -- (sum1);
		\draw [-latex] (otimes2) -- (sum2);
		\draw [-latex] (otimesm) -- (summ);
		\node (v1) at (10,0) {$u_1$};
		\node (v2) at (10,-2) {$u_2$};
		\node (vm) at (10,-4.5) {$u_N$};
		\draw [-latex] (sum1) -- (v1);
		\draw [-latex] (sum2) -- (v2);
		\draw [-latex] (summ) -- (vm);
		\path (v2) -- (vm)  node [midway, sloped] {$\dots$};
%
%
%
%
%
%
%
%
		\end{tikzpicture}
	\caption{The structure of the approximate sufficient statistics.}
	\label{Fig.illustrate2}
\end{figure}

\section{Main Results}
\label{sec:main results}
In this section, we present definitions and results for the {\em matched filter error exponent} and the {\em approximate sufficient statistic error exponent}. Let $\beta=T/M$ be the {\em transmission rate} (ratio of number of time samples to number of signals). Throughout the paper, we assume $1\le \beta<\infty$. 
The requirement that $\beta\geq1$ is mandated by a technical reason, namely, the fact that we require $\mathbf{G}$ to be invertible. 

\subsection{Matched Filter}
 
In this subsection, we define and present our results for the matched filter error exponent.
\begin{definition} [Matched filter error exponent]
	The matched filter error exponent with transmission rate $\beta$ is defined as 
	\begin{equation*}
	E_{\mathrm{MF}}(\beta,\mathrm{SNR})=\liminf_{M\rightarrow \infty}-\frac{1}{M}\log P_{\mathrm{MF}}(\mathrm{err}).
	\end{equation*}
\end{definition}

The following theorem whose proof leverages Gallager's bounding techniques~\cite[Sec.~7.4]{gallager} for proving bounds on the reliability function of a channel establishes the optimal error exponent as a function of the transmission rate $\beta$ and $\mathrm{SNR}$. The proof of Theorem~\ref{prop:base} is in Appendix~\ref{app:thm1}. 
\begin{theorem} \label{prop:base}
	Let $E^*(\beta, \mathrm{SNR})$ be the optimal error exponent, i.e., the error exponent when the optimal decoder (the maximum likelihood detector) is employed. 
	For $\beta\ge 1$, 
	\begin{align}
	E^*(\beta,\mathrm{SNR})=E_{\mathrm{MF}}(\beta,\mathrm{SNR})=\frac{\beta}{2}\log\left(1+\frac{\mathrm{SNR}}{2}\right).	\label{ss}
	\end{align}
\end{theorem}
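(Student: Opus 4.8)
The plan is to treat the $M$ equiprobable hypotheses as a random Gaussian codebook of $M$ codewords, each of blocklength $T=\beta M$, sent over the AWGN channel $\by=\bs_1+\bz$, and to prove both equalities by sandwiching $E_{\mathrm{MF}}$ and $E^*$ between one achievability bound and one converse bound. The key structural observation is that the effective coding rate $R=(\log M)/T=(\log M)/(\beta M)$ vanishes as $M\to\infty$, so any factor that is polynomial in $M$ (in particular the $M-1$ produced by a union bound) is negligible on the $-\frac1M\log(\cdot)$ scale. The exponent should therefore equal $\beta$ times the per-symbol reliability at zero rate, namely $\beta E_0(1)$ with Gallager's function $E_0(\rho)=\tfrac{\rho}{2}\log(1+\tfrac{\mathrm{SNR}}{1+\rho})$ for a Gaussian input; since $E_0(1)=\tfrac12\log(1+\tfrac{\mathrm{SNR}}{2})$, this already matches the claimed value.

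For achievability I would invoke Gallager's random-coding bound \cite[Sec.~7.4]{gallager} for the i.i.d.\ $\mathcal N(0,\calE^2 I_T)$ ensemble under ML decoding, $\bbE[\Pen]\le(M-1)^{\rho}e^{-T E_0(\rho)}$ for all $\rho\in[0,1]$. Because $E_0$ is increasing, $\rho=1$ is optimal at rate $0$; combined with $\frac1M\log(M-1)\to0$ this yields $E^*\ge\beta E_0(1)=\tfrac{\beta}{2}\log(1+\tfrac{\mathrm{SNR}}{2})$. The only model-specific computation is evaluating $E_0(\rho)$ for Gaussian inputs through the $\mathcal N(0,\sigma^2)$ channel, a one-dimensional Gaussian integral.

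For the converse I would lower bound the ML error probability by a single pairwise error probability, $\bbE[\Pen]\ge\bbP(\|\by-\bs_2\|^2\le\|\by-\bs_1\|^2)$. Substituting $\by=\bs_1+\bz$ reduces the event to $\|\bs_1-\bs_2\|^2+2\langle\bs_1-\bs_2,\bz\rangle\le0$, a sum of $T$ i.i.d.\ terms; by Cram\'er's theorem its exponent is exact, and after minimizing the associated Gaussian-quadratic-form moment generating function over the tilt $\lambda\ge0$ it equals $E_0(1)$ per symbol. Hence $E^*\le\beta E_0(1)$, and together with the achievability bound $E^*=\tfrac{\beta}{2}\log(1+\tfrac{\mathrm{SNR}}{2})$.

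It remains to handle the matched filter. Since $\argmax_j v_j$ is a legitimate (if suboptimal) decoder, $E_{\mathrm{MF}}\le E^*$ is immediate, so only $E_{\mathrm{MF}}\ge\beta E_0(1)$ must be shown. Here I would condition on the signals, use that $\bv=\bG\bb+\bw$ is then Gaussian with $\cov(\bw)=\sigma^2\bG$, bound $\bbP(\max_{j\ne1}v_j\ge v_1)$ by a union-plus-Chernoff estimate on the pairwise events $\{v_j\ge v_1\}$, and finally average the resulting exponent over the Wishart law of $\bG$. I expect this last step to be the main obstacle: because the Gaussian signals do not have equal energy, the matched filter is a \emph{mismatched} decoder, so one must verify that the energy fluctuations $\|\bs_i\|^2-T\calE^2$ and the off-diagonal entries of $\bG$ do not erode the exponent below $E_0(1)$ in the tilted large-deviation regime. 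Establishing this—showing that the matched filter is not penalized on the exponential scale despite not being the ML rule—via the concentration $\|\bs_i\|^2\approx T\calE^2$ together with a careful change of measure in the Wishart average is the delicate heart of the argument.
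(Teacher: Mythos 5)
Your treatment of the first equality, $E^*(\beta,\mathrm{SNR})=\tfrac{\beta}{2}\log(1+\tfrac{\mathrm{SNR}}{2})$, is correct and is in substance the paper's own argument: the paper also proves achievability via Gallager's bound at $\rho=1$ conditioned on the signals, then averages the pairwise exponent $\|\bs_j-\bs_1\|^2/(8\sigma^2)$ over its $2\calE^2\chi_T^2$ law using the chi-squared moment generating function (this integral is exactly the evaluation of your Gaussian-ensemble $E_0(1)$), and it proves the matching lower bound by retaining a single pairwise event, just as you do via Cram\'er's theorem; the vanishing-rate observation that disposes of the $(M-1)^\rho$ factor is also the same.

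The genuine gap is in the matched-filter half, and it is not a repairable one: the step you defer---``showing that the matched filter is not penalized on the exponential scale despite not being the ML rule''---is false, so the concentration-plus-change-of-measure program you sketch cannot close. At the error-exponent scale, the energy fluctuations are of exactly the same exponential order as the error events themselves. Concretely, under hypothesis $1$ one has $\mathbb{E}[v_j-v_1\mid\bs_1,\bs_j]=\langle\bs_j,\bs_1\rangle-\|\bs_1\|^2$, so on the event $\calA_j=\{\langle\bs_1,\bs_j\rangle>\|\bs_1\|^2\}$ the conditional pairwise error probability of the matched filter exceeds $1/2$. Since $\mathbb{P}(\calA_j\mid\bs_1)=Q(\|\bs_1\|/\calE)$ and $\|\bs_1\|^2/\calE^2\sim\chi_T^2$, the two-sided Gaussian tail bounds used in the paper itself show that $\mathbb{P}(\calA_j)$ has exponent exactly $\tfrac{T}{2}\log 2$. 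Hence
\begin{align*}
P_{\mathrm{MF}}(\mathrm{err})\;\ge\;\mathbb{P}(v_2\ge v_1)\;\ge\;\tfrac12\,\mathbb{P}(\calA_2)
\qquad\Longrightarrow\qquad
E_{\mathrm{MF}}(\beta,\mathrm{SNR})\;\le\;\frac{\beta}{2}\log 2,
\end{align*}
which is strictly smaller than $\tfrac{\beta}{2}\log(1+\tfrac{\mathrm{SNR}}{2})$ whenever $\mathrm{SNR}>2$. (Carrying out the full Chernoff/Cram\'er computation for the i.i.d.\ sum $v_2-v_1=\sum_{t}(s_2(t)-s_1(t))(s_1(t)+z(t))$ gives the exact pairwise exponent $\tfrac{\beta}{2}\log\tfrac{2(1+\mathrm{SNR})}{2+\mathrm{SNR}}$, which lies below the claimed value for \emph{every} $\mathrm{SNR}>0$.) So no amount of care in the Wishart averaging can establish $E_{\mathrm{MF}}\ge\tfrac{\beta}{2}\log(1+\tfrac{\mathrm{SNR}}{2})$.

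For completeness, the reason the paper's proof does not collide with this obstruction is that it steps around it with an identity that does not hold here: in its appendix it asserts $\mathbb{P}(v_j>v_1\mid\bG)=\mathbb{P}(\tilde v_j>\tilde v_1\mid\bG)$ with $\tilde{\bv}=\bG^{-1/2}\bv$, and then bounds the pairwise probability as if its exponent were $\|\bs_1-\bs_j\|^2/(8\sigma^2)$. Both steps implicitly require equal-energy signals, for which $\langle\bs_1-\bs_j,\bs_1\rangle=\tfrac12\|\bs_1-\bs_j\|^2$; for the i.i.d.\ Gaussian signals of this model they fail, and the discrepancy is precisely the energy-fluctuation term you flagged. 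In short, your instinct about where the difficulty lies is exactly right, but the resolution is that the second equality in \eqref{ss} does not withstand an honest large-deviation analysis, not that a sharper argument completes it; only the $E^*$ part of the statement, and your proof of it, stands.
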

In addition to providing an expression for $E_{\mathrm{MF}}(\beta, \mathrm{SNR})$, this theorem says that the matched filter is asymptotically optimal, i.e., $E_{\mathrm{MF}}(\beta,\mathrm{SNR})$ is equal to the error exponent of the optimal decoding strategy $E^*(\beta,\mathrm{SNR})$. This is intuitively plausible because even though the signals $\{s_i\}_{i=1}^M$ do not have the same energy (almost surely), their energies $\|s_i\|_2^2/T$ are arbitrarily close to one another with high probability when $T\to\infty$, so the matched filter is asymptotically optimal.

\subsection{Approximate Sufficient Statistic}
 In this subsection, we define and present our results for the error exponent associated with the approximate sufficient statistics that we construct in the following.
\begin{definition}
	Let $\{N_M\}_{M=1}^\infty$ be an increasing sequence of integers such that the limit $\lim_{M\to\infty} N_M/M$ exists.  The sequence of sensing matrices $\{\mathbf{A}_M \in \mathbb{R}^{N_M\times M}\}_{M=1}^{\infty}$ has {\em compression rate} $\alpha$ if $\lim_{M\to\infty}{N_M}/{M}=\alpha$.
\end{definition}
{\color{black}We also assume throughout the paper that $0< \alpha\le 1$.  This means that asymptotically,  the approximate sufficient statistics $\mathbf{u}$ has dimension $N$ no larger than $M$,  the number of hypotheses we started out with.  }
\begin{definition} [Approximate sufficient statistic error exponent]
	We say that $E_\rma$ is an {\em achievable error exponent} with transmission rate $\beta$ and compression rate $\alpha$ if there exists a sequence of sensing matrices $\{\bA_M\in \mathbb{R}^{N_M\times M}\}_{M=1}^{\infty}$ with compression rate $\alpha$  and a corresponding sequence of decoders (maps from the approximate sufficient statistic to a hypothesis) such that the detection error probability $P_{\rma}(\mathrm{err})$ satisfies 
	\begin{equation*}
	\liminf_{M\rightarrow \infty} -\frac{1}{M} \log P_{\rma}(\mathrm{err})\ge E_\rma.
	\end{equation*}
	We define the {\em approximate sufficient statistic error exponent} $E_{\mathrm{a}}^*(\alpha,\beta,\mathrm{SNR})$ as
	\begin{equation*}
	E_\rma^*(\alpha,\beta,\mathrm{SNR})=\sup E_\rma,
	\end{equation*}
	where the supremum is taken over all achievable error exponents with transmission rate $\beta$ and compression rate $\alpha$ when the signal-to-noise ratio of the system is $\mathrm{SNR}$.
\end{definition}


Theorems \ref{compressed_exp} and \ref{compressed_rdd} present lower bounds on the approximate sufficient statistic error exponent given a certain sequence of sensing matrices and two different decoding strategies.  In the compressive sensing literature, the sensing matrix is either deterministic or random~\cite{Foucart}. In our work, we find it convenient in our analysis to choose deterministic sensing matrices that have orthonormal rows and low coherence to obtain achievable lower bounds on the error exponent. We find that a column-normalized group Hadamard matrix~\cite{thill}, described in greater detail in Section~\ref{sec:preli}, satisfies these desirable properties. 
In particular, it has the favorable property of having a low coherence when $\frac{M-1}{N}$ is a natural number. We recall that the coherence of a matrix~\cite{coherence}, formally defined in~\eqref{eqn:coherence1}, is the largest (in magnitude) inner product between any two columns of the matrix. This low coherence property is particularly desirable in ensuring the analysis is amenable to obtaining  bounds on the error exponent under both maximum likelihood  decoding and the modified RDD.   
Before introducing our results, we first define error exponents pertaining to this choice of sensing matrix $\bA$. 

\begin{definition}[Maximum likelihood detector error exponent]
	The error exponent with transmission rate $\beta$ and compression rate $\alpha$ when we use the maximum likelihood detector and set $\bA$ to be  a column-normalized group Hadamard matrix with compression rate $\alpha$ is defined as 
	\begin{equation*}
	E_{\mathrm{a},\mathrm{ML}}(\beta,\alpha,\mathrm{SNR})=\liminf_{M\rightarrow \infty}-\frac{1}{M}\log P_{\mathrm{a},\mathrm{ML}}(\mathrm{err}),
	\end{equation*}
	where $P_{\mathrm{a},\mathrm{ML}}(\mathrm{err})$ is the detection error probability when we set $\bA$ to be a column-normalized group Hadamard matrix and the maximum likelihood detector in~\eqref{eqn:mld} is used.
\end{definition}

\begin{theorem}
	\label{compressed_exp}
	For $\beta\geq 1$ and $0<\alpha\le 1$ such that  $1/\alpha$ is a natural number, 
	\begin{align}\label{exponent_mld}
	E_{\mathrm{a}}^*(\beta,\alpha,\mathrm{SNR})&\geq E_{\mathrm{a},\mathrm{ML}}(\beta,\alpha,\mathrm{SNR})\\*
	&=\frac{\beta-1+\alpha}{2}\log\left(1+\frac{\alpha\mathrm{SNR}}{2}\right)\label{eqn:ml_exp}.
	\end{align}
\end{theorem}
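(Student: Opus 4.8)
The plan is to prove the two assertions separately. The inequality $E_\rma^*(\beta,\alpha,\mathrm{SNR})\geq E_{\mathrm{a},\mathrm{ML}}(\beta,\alpha,\mathrm{SNR})$ is immediate from the definition of $E_\rma^*$ as the supremum of all achievable exponents: since $1/\alpha$ is a natural number, a column-normalized group Hadamard matrix with compression rate exactly $\alpha$ exists, and pairing this sequence of matrices with the ML detector in~\eqref{eqn:mld} exhibits $E_{\mathrm{a},\mathrm{ML}}$ as one achievable exponent, so the supremum dominates it. The substance of the theorem is therefore the evaluation of $E_{\mathrm{a},\mathrm{ML}}$, for which I would establish matching upper and lower bounds on $P_{\mathrm{a},\mathrm{ML}}(\mathrm{err})$ on the exponential scale.

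For the achievability direction (an upper bound on $P_{\mathrm{a},\mathrm{ML}}(\mathrm{err})$, hence $E_{\mathrm{a},\mathrm{ML}}\geq$ the stated value) I would reuse the Gallager bounding technique already invoked for Theorem~\ref{prop:base}. Starting from the union--Gallager bound with a parameter $\rho\in[0,1]$ applied to the marginal-likelihood decoder, $P_{\mathrm{a},\mathrm{ML}}(\mathrm{err})\leq \int \bbP(\bu\mid 1)\bigl(\sum_{k\neq 1}(\bbP(\bu\mid k)/\bbP(\bu\mid 1))^{1/(1+\rho)}\bigr)^{\rho}\,\mathrm{d}\bu$, I would substitute the conditional Gaussian form~\eqref{eqn:gaussian} and interchange the $\bu$-integral with the Wishart average $\bbP(\bG)$. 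Because all hypotheses enter symmetrically through the columns $\ba_k$ of $\bA$, and because the group Hadamard matrix is row-orthonormal ($\bA\bA^T$ proportional to $\bI_N$) and has vanishing coherence, the covariance $\sigma^2\bA\bG^{-1}\bA^T$ and the cross terms $\ba_j^T(\bA\bG^{-1}\bA^T)^{-1}\ba_k$ collapse to tractable scalar quantities, reducing the bound to a single-letter Gaussian computation. Since the effective coding rate $M^{-1}\log M\to 0$, the optimal choice is $\rho=1$, so the exponent assumes the zero-rate Gallager ($E_0(1)$, i.e.\ Bhattacharyya) form, which for an AWGN-type channel is $\tfrac12\log(1+\tfrac{\mathrm{SNR}}{2})$ per degree of freedom.

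The crux of the calculation is then the Wishart average. Here $\bG$ is Wishart with $T=\beta M$ degrees of freedom and dimension $M$, so $\bG^{-1}$ is inverse-Wishart and $\bA\bG^{-1}\bA^T$ is a projected inverse-Wishart whose log-determinant and inverse quadratic forms I would evaluate via matrix-variate Gaussian integrals. I expect the $T-M+N=(\beta-1+\alpha)M$ residual degrees of freedom of the projected Wishart to produce the prefactor $\tfrac{\beta-1+\alpha}{2}$, while the attenuation of the signal energy carried by each column under the projection produces the effective signal-to-noise ratio $\alpha\,\mathrm{SNR}$ inside the logarithm; together these give $\tfrac{\beta-1+\alpha}{2}\log(1+\tfrac{\alpha\,\mathrm{SNR}}{2})$. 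A useful consistency check is that at $\alpha=1$ this collapses to $\tfrac{\beta}{2}\log(1+\tfrac{\mathrm{SNR}}{2})=E_{\mathrm{MF}}(\beta,\mathrm{SNR})$ of Theorem~\ref{prop:base}, as it must when there is no compression.

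For the converse direction ($E_{\mathrm{a},\mathrm{ML}}\leq$ the stated value) I would lower bound $P_{\mathrm{a},\mathrm{ML}}(\mathrm{err})$ by the probability of a dominant pairwise-confusion event, again averaged over the Wishart $\bG$, and show that the very same Gaussian--Wishart integral governs this lower bound up to sub-exponential factors; this is precisely the ensemble-tightness asserted in the abstract. The main obstacle throughout is the average over $\bG$: the conditional density is a clean Gaussian, but $\bA\bG^{-1}\bA^T$ is a correlated random matrix, and one must show that its relevant functionals concentrate sharply enough that the marginal integral is controlled by a single deterministic exponent. Controlling these functionals is exactly where the row-orthonormality and low-coherence properties of the group Hadamard matrix (valid under $1/\alpha\in\mathbb{N}$) become indispensable, since without them the cross terms $\ba_j^T(\bA\bG^{-1}\bA^T)^{-1}\ba_k$ would fail to decouple and the clean closed form would be lost.
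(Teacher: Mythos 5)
Your outline reproduces most of the paper's machinery correctly: the split into an achievable bound \eqref{mld_ach} and an ensemble converse \eqref{mld_cov}, the choice $\rho=1$ in the Gallager bound justified by $M^{-1}\log M\to 0$, the appearance of $T-M+N$ degrees of freedom from the projected Wishart law, the role of row orthonormality and vanishing coherence of the group Hadamard matrix, and the pairwise-confusion lower bound for the converse are exactly the ingredients used there. However, the central step of your achievability argument has a genuine gap: you cannot ``substitute the conditional Gaussian form \eqref{eqn:gaussian} and interchange the $\bu$-integral with the Wishart average'' inside the Gallager bound written for the marginal likelihoods $\bbP(\bu\mid k)$. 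With $\rho=1$ that bound is $\sum_{k\neq 1}\int\sqrt{\bbP(\bu\mid 1)\bbP(\bu\mid k)}\,\rmd\bu$, and since each $\bbP(\bu\mid k)$ is a $\bG$-mixture, Cauchy--Schwarz applied to the $\bG$-average gives, for every fixed $\bu$,
\begin{align*}
\int \bbP(\bG)\sqrt{\bbP(\bu\mid\bG,1)\,\bbP(\bu\mid\bG,k)}\,\rmd\bG \;\le\; \sqrt{\bbP(\bu\mid 1)\,\bbP(\bu\mid k)},
\end{align*}
so pulling the Wishart average outside produces a \emph{smaller} quantity than the one you need to control. The tractable Gaussian--Wishart integral you would then compute is a lower bound on your marginal Gallager bound, not an upper bound on it; what it does legitimately upper bound is the error probability of the ML rule that knows $\bG$, not of the marginal rule \eqref{eqn:mld} you set out to analyze. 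Fractional powers of mixtures simply do not commute with the mixing integral, and no symmetry or low-coherence property of $\bA$ repairs the direction of that inequality.

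The paper avoids this issue by conditioning on $\bG$ \emph{before} invoking Gallager: given $\bG$, it whitens $\bu$ by $(\bA\bG^{-1}\bA^T)^{-1/2}$, applies the Gallager bound to the conditional Gaussian densities to obtain \eqref{gallager2}, whose $\rho=1$ exponent is the single quadratic form $(\ba_j-\ba_1)^T(\bA\bG^{-1}\bA^T)^{-1}(\ba_j-\ba_1)$; Lemma~\ref{wish} identifies this quadratic form as an exactly scaled $\chi^2_{T-M+N}$ random variable, so the average over $\bG$ is a bare moment-generating-function computation \eqref{im1}---a linear average of a conditional upper bound, which is always valid. Note that this route needs neither log-determinants of $\bA\bG^{-1}\bA^T$ nor any concentration argument, contrary to what you anticipate as the ``main obstacle'': after whitening and with $\rho=1$, only the pairwise quadratic form survives, and its law is exact, with the coherence bound of Lemma~\ref{coherence} entering only to pin $\|(\bA\bA^T)^{-1/2}\bA(\bb_1-\bb_j)\|_2^2$ between $2\alpha(1-\mu)$ and $2\alpha(1+\mu)$ with $\mu\to 0$. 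The price, paid silently by the paper, is that the decoder actually analyzed is the ML rule given the signals (hence given $\bG$); for the converse this causes no loss since a lower bound on the better-informed decoder's error also lower bounds that of the marginal rule. If you reorganize your argument accordingly---conditional Gallager, then Lemma~\ref{wish}, then the coherence bound---your outline becomes the paper's proof.
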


The proof of Theorem~\ref{compressed_exp} can be found in Section~\ref{sec:proof}. For the special choice of $\bA$ as a column-normalized group Hadamard matrix, we can obtain a tight result, which means in the proof of Theorem~\ref{compressed_exp} we show two inequalities, namely that the so-described sensing matrices and maximum likelihood decoding scheme result in an error exponent that satisfies
\begin{align}
\label{mld_ach}
E_{\mathrm{a},\mathrm{ML}}(\beta,\alpha,\mathrm{SNR})\geq\frac{\beta-1+\alpha}{2}\log\left(1+\frac{\alpha\mathrm{SNR}}{2}\right)
\end{align}
and our analysis is {\em ensemble tight} in the sense that 
\begin{align}
\label{mld_cov}
E_{\mathrm{a},\mathrm{ML}}(\beta,\alpha,\mathrm{SNR})\leq\frac{\beta-1+\alpha}{2}\log\left(1+\frac{\alpha\mathrm{SNR}}{2}\right).
\end{align}
Note that we are not claiming an impossibility result (or converse) result over {\em all}  sequences of  sensing matrices $\bA$ and decoding strategies; this would serve as an upper bound on $E_{\mathrm{a}}^*(\beta,\alpha,\mathrm{SNR})$.

In the rest of section, we compare the achievability result in Theorem~\ref{compressed_exp} to an exponent that one can obtain based on a computationally tractable decoding strategy, the modified RDD.

\begin{definition} [Modified RDD error exponent]
	The error exponent with transmission rate $\beta$ and compression rate $\alpha$ 	when we use the modified RDD and set $\bA$ to be a column-normalized group Hadamard matrix with compression rate $\alpha$   is defined as 
	\begin{equation*}
	E_{\mathrm{a},\mathrm{mRDD}}(\beta,\alpha,\mathrm{SNR})=\liminf_{M\rightarrow \infty}-\frac{1}{M}\log P_{\mathrm{a},\mathrm{mRDD}}(\mathrm{err}),
	\end{equation*}
	where $P_{\mathrm{a},\mathrm{mRDD}}(\mathrm{err})$ is the detection error probability when we set $\bA$ to be a column-normalized group Hadamard matrix and the modified RDD in~\eqref{rdd-like} is used.
\end{definition}

\begin{theorem}
	\label{compressed_rdd}
	For $\beta\geq 1$ and $0<\alpha\le 1$ such that  $1/\alpha$ is a natural number, 
	\begin{align}\label{exponent_rdd}
	E_{\mathrm{a}}^*(\beta,\alpha,\mathrm{SNR})&\geq E_{\mathrm{a},\mathrm{mRDD}}(\beta,\alpha,\mathrm{SNR})\\*
	&=\frac{\beta-1}{2}\log\left(1+\frac{\alpha\mathrm{SNR}}{2}\right)\label{eqn:mrdd_exp}.
	\end{align}
\end{theorem}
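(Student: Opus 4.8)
The plan is to mirror the conditioning-on-$\bG$ argument used for the maximum likelihood exponent, but now with the linear test statistic $\ba_k^T\bu$, which is far simpler to analyze. Assume without loss of generality that hypothesis $1$ is in effect, so that $\bu=\ba_1+\bmeta$ with $\bmeta\mid\bG\sim\calN(\bzero,\sigma^2\bA\bG^{-1}\bA^T)$. Writing $T_k:=\ba_k^T\bu$, the modified RDD errs exactly when $\max_{k\neq 1}T_k\ge T_1$. First I would reduce to a single pairwise comparison: trivially $P_{\mathrm{a},\mathrm{mRDD}}(\mathrm{err})\ge\bbP(T_2\ge T_1)$, while the union bound gives $P_{\mathrm{a},\mathrm{mRDD}}(\mathrm{err})\le(M-1)\max_{k\neq1}\bbP(T_k\ge T_1)$. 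Because the prefactor $M-1$ is subexponential, it disappears under $-\tfrac1M\log(\cdot)$, so the exponent equals $\lim_{M\to\infty}-\tfrac1M\log\bbP(T_k\ge T_1)$ for a generic competitor. This sandwich is precisely what delivers \emph{ensemble tightness}, i.e. both \eqref{exponent_rdd} and the matching reverse inequality on $E_{\mathrm{a},\mathrm{mRDD}}$.

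Next I would compute the pairwise error. With $\bd:=\ba_k-\ba_1$ and $\Delta:=\|\ba_1\|^2-\ba_1^T\ba_k$, the difference $T_k-T_1=\bd^T\bu$ is, conditionally on $\bG$, Gaussian with mean $-\Delta$ and variance $\sigma^2\bc^T\bG^{-1}\bc$, where $\bc:=\bA^T\bd$; hence $\bbP(T_k\ge T_1\mid\bG)=Q\!\left(\Delta\big/\sqrt{\sigma^2\bc^T\bG^{-1}\bc}\right)$. The crucial distributional fact is that $\bG$, whose $(i,j)$ entry is $\langle s_i(\cdot),s_j(\cdot)\rangle$, is Wishart, $\bG\sim W_M(T,\calE^2\bI)$; by the standard quadratic-form identity for inverse Wishart matrices, $\frac{\|\bc\|^2/\calE^2}{\bc^T\bG^{-1}\bc}\sim\chi^2_{T-M+1}$. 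Substituting, the conditional pairwise error equals $Q(\sqrt{aX})$ with $X\sim\chi^2_{T-M+1}$ and $a=\Delta^2\,\mathrm{SNR}/\|\bc\|^2$.

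I would then average over $X$ on the exponential scale. Combining $Q(\sqrt{ax})\le\tfrac12 e^{-ax/2}$ with a polynomially-corrected lower bound on $Q$ and the chi-squared moment generating function $\bbE[e^{-aX/2}]=(1+a)^{-(T-M+1)/2}$ yields $\bbP(T_k\ge T_1)\doteq(1+a)^{-(T-M+1)/2}$. It remains to evaluate $a$ using the properties of the column-normalized group Hadamard matrix from Section~\ref{sec:preli}: the columns have unit norm, $\bA\bA^T=\alpha^{-1}\bI_N$, and the coherence vanishes as $M\to\infty$. Consequently $\Delta=1-\ba_1^T\ba_k\to 1$ and $\|\bc\|^2=\bd^T\bA\bA^T\bd=\alpha^{-1}\|\bd\|^2=\alpha^{-1}\cdot 2(1-\ba_1^T\ba_k)\to 2/\alpha$, so $a\to\alpha\,\mathrm{SNR}/2$. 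Since $(T-M+1)/(2M)\to(\beta-1)/2$, I obtain $-\tfrac1M\log\bbP(T_k\ge T_1)\to\frac{\beta-1}{2}\log\!\big(1+\tfrac{\alpha\mathrm{SNR}}{2}\big)$, which is exactly \eqref{eqn:mrdd_exp}.

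The hard part will be the exponentially tight evaluation of the $\bG$-averaged $Q$-function: I must show that the inverse-Wishart quadratic form enters only through the exact $\chi^2_{T-M+1}$ representation and that the Gaussian tail of $Q$ neither inflates nor deflates the rate, i.e. that both the upper bound $Q(\sqrt{ax})\le\tfrac12 e^{-ax/2}$ and a matching lower bound produce the common rate $(1+a)^{-(T-M+1)/2}$. A secondary but essential point is to check that the vanishing coherence renders the $\ba_1^T\ba_k$ correction negligible \emph{uniformly} over $k\neq1$, so that the union bound over the $M-1$ competitors stays exponentially tight and the single-competitor lower bound matches it.
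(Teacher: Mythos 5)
Your proposal is correct and takes essentially the same route as the paper's proof: a union-bound/single-pairwise-event sandwich conditioned on $\bG$, the pairwise error written as a $Q$-function whose argument involves an inverse-Wishart quadratic form (hence the exact $\chi^2_{T-M+1}$ representation), upper and lower $Q$-bounds combined with the chi-squared moment generating function, and the vanishing coherence of the column-normalized group Hadamard matrix to conclude. Your only deviations are cosmetic simplifications: you apply the inverse-Wishart quadratic-form identity directly to $\bG$ rather than passing through Lemma~\ref{wish} for $(\bA\bG^{-1}\bA^T)^{-1}$ (the two are equivalent here), and you use $Q(x)\le\tfrac12 e^{-x^2/2}$ in place of the paper's $Q(x)\le\frac{1}{\sqrt{2\pi}\,x}e^{-x^2/2}$, which neatly sidesteps the paper's change of chi-squared density and Gautschi-inequality step while yielding the same exponent.
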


The proof of Theorem~\ref{compressed_rdd} can be found in Section~\ref{sec:proof}. Our result in Theorem~\ref{compressed_rdd} implies that
\begin{align}
E_{\mathrm{a},\mathrm{ML}}(\beta,\alpha,\mathrm{SNR})> E_{\mathrm{a},\mathrm{mRDD}}(\beta,\alpha,\mathrm{SNR}),
\end{align}
when $\alpha>0$, which means the performance of modified RDD is strictly worse than the maximum likelihood detector. Similarly to the maximum likelihood-based result in Theorem~\ref{compressed_exp}, we analyze the ensemble performance of the modified RDD. This entails showing the achievable performance bound
\begin{align}
\label{rdd_ach}
E_{\mathrm{a},\mathrm{mRDD}}(\beta,\alpha,\mathrm{SNR})\geq\frac{\beta-1}{2}\log\left(1+\frac{\alpha\mathrm{SNR}}{2}\right)
\end{align}
as well as a bound that says that our analysis is tight for the chosen ensemble of sensing matrices and decoding strategy, i.e.,
\begin{align}
\label{rdd_cov}
E_{\mathrm{a},\mathrm{mRDD}}(\beta,\alpha,\mathrm{SNR})\leq\frac{\beta-1}{2}\log\left(1+\frac{\alpha\mathrm{SNR}}{2}\right).
\end{align}

We now briefly comment on the result of Theorem~\ref{compressed_rdd} to Xie, Eldar and Goldsmith's work~\cite{Xie2013}. In~\cite{Xie2013}, using coherence properties of $\bA$, the authors presented an upper bound on the detection error probability using the RDD in~\eqref{rdd} in terms of  the maximum eigenvalue of $\bG$ and the maximum value of $\{\ba_i^T\bA\bA^T\ba_i\}_{i=1}^M$. They showed that the upper bound is $M^{-c}[\pi(1+c)\log M]^{-1/2}$ for some constant $c>0$; this upper bound decays polynomially in $M$. We see from the positivity of $E_{\mathrm{a},\mathrm{mRDD}}(\beta,\alpha,\mathrm{SNR})$ that what the authors of~\cite{Xie2013} obtained  is ostensibly a loose upper bound on the error probability. However, the settings are slightly different (as detailed in  Section~\ref{sec:related}). Our analysis provides a tighter upper bound on the error probability {\color{black}for the detection of $1$-sparse signals} when we consider the modified decoder in~\eqref{rdd-like} in which the absolute value on $\ba_k^T\bu$ is removed. The tighter bound is due to two reasons.
First, we take the randomness of $\bG$ into account in our analysis   (averaging over $\bG$) instead of using the constraints related only to  the {\em maximum} eigenvalue of $\bG$. Second, we choose a sensing matrix (more precisely, a sequence of sensing matrices) that have low coherence (see Lemma \ref{coherence}) instead of using the maximum value of $\{\ba_i^T\bA\bA^T\ba_i\}_{i=1}^M$, a proxy of the coherence. These two reasons allow us to tighten the bound on the error probability in~\cite{Xie2013}, showing that, in fact it decays exponentially fast with exponent given in \eqref{eqn:mrdd_exp}.

\section{Examples and Discussions}

In this section, we first present some numerical examples to compare the behavior of the three error exponents in \eqref{ss}, \eqref{eqn:ml_exp} and \eqref{eqn:mrdd_exp}. We then study how the two error exponents for the approximate sufficient statistics in~\eqref{exponent_mld} and~\eqref{exponent_rdd} depend on the various parameters, namely $\beta$, $\alpha$, and $\mathrm{SNR}$. 

In Fig.~\ref{Fig.examples}, we plot $E_\mathrm{MF}(\beta,\mathrm{SNR})$, $E_{\mathrm{a},\mathrm{ML}}(\beta,\alpha,\mathrm{SNR})$ and $E_{\mathrm{a},\mathrm{mRDD}}(\beta,\alpha,\mathrm{SNR})$ for various values of $\alpha$ and $\beta$ where we note that $1/\alpha$ must be a natural number (i.e., $1/\alpha\in\bbN$).  Specifically, we consider the scenarios in four different settings. Note that we always have
\begin{align}
E_\mathrm{MF}(\beta,\mathrm{SNR})\geq E_{\mathrm{a},\mathrm{ML}}(\beta,\alpha,\mathrm{SNR})>E_{\mathrm{a},\mathrm{mRDD}}(\beta,\alpha,\mathrm{SNR})\notag.
\end{align}
When $\alpha=1$ (no compression), $E_{\mathrm{a},\mathrm{ML}}(\beta,\alpha,\mathrm{SNR})$ reduces to  $E_\mathrm{MF}(\beta,\mathrm{SNR})$. However, $E_{\mathrm{a},\mathrm{mRDD}}(\beta,\alpha,\mathrm{SNR})$ is {\em not equal} to $E_\mathrm{MF}(\beta,\mathrm{SNR})$ when $\alpha=1$. This means that even though there is no compression, the modified RDD suffers from some degradation in terms of the error exponent. Note that this is not due to looseness in our analysis because~\eqref{rdd_cov} shows that our analysis is tight, at least for the sequence of sensing matrices considered. In Fig.~\ref{Fig.sub.1}, we plot  $E_{\mathrm{a},\mathrm{ML}}(\beta,\alpha,\mathrm{SNR})$ and $E_{\mathrm{a},\mathrm{mRDD}}(\beta,\alpha,\mathrm{SNR})$ as a functions of $\alpha$ for fixed $\beta$ and $\mathrm{SNR}$. However, we note that in our analysis and results, the $\alpha$'s that are  permitted are those satisfying $1/\alpha\in\bbN$. To make the curves of $E_{\mathrm{a},\mathrm{ML}}(\beta,\alpha,\mathrm{SNR})$ and $E_{\mathrm{a},\mathrm{mRDD}}(\beta,\alpha,\mathrm{SNR})$ continuous, we linearly interpolate between the points at which $1/\alpha\in\bbN$.  We note that $E_{\mathrm{a},\mathrm{ML}}(\beta,\alpha,\mathrm{SNR})$  tends to zero linearly  when $\alpha$ tends to zero.  In Fig.~\ref{Fig.sub.2}, $E_{\mathrm{a},\mathrm{ML}}(\beta,\alpha,\mathrm{SNR})$ and $E_{\mathrm{a},\mathrm{mRDD}}(\beta,\alpha,\mathrm{SNR})$ increase linearly with $\beta$ for fixed $\alpha$ and $\mathrm{SNR}$. Fig.~\ref{Fig.sub.3} plots $E_{\mathrm{a},\mathrm{ML}}(\beta,\alpha,\mathrm{SNR})$ and $E_{\mathrm{a},\mathrm{mRDD}}(\beta,\alpha,\mathrm{SNR})$ against $\epsilon$  again by linearly interpolating between points corresponding to $1/\epsilon\in\bbN$. When $\alpha=\epsilon$ and $\beta=1+\epsilon$ tend to $0$ and $1$ respectively, the two error exponents behave quadratically in $\epsilon$. In Fig.~\ref{Fig.sub.4}, for fixed $\alpha$ and $\beta$, the two error exponents increase logarithmically with $\mathrm{SNR}$.

\begin{figure*}[h]
	\centering 
	\subfigure[]{
		\label{Fig.sub.1}
		\includegraphics[width=0.48\textwidth]{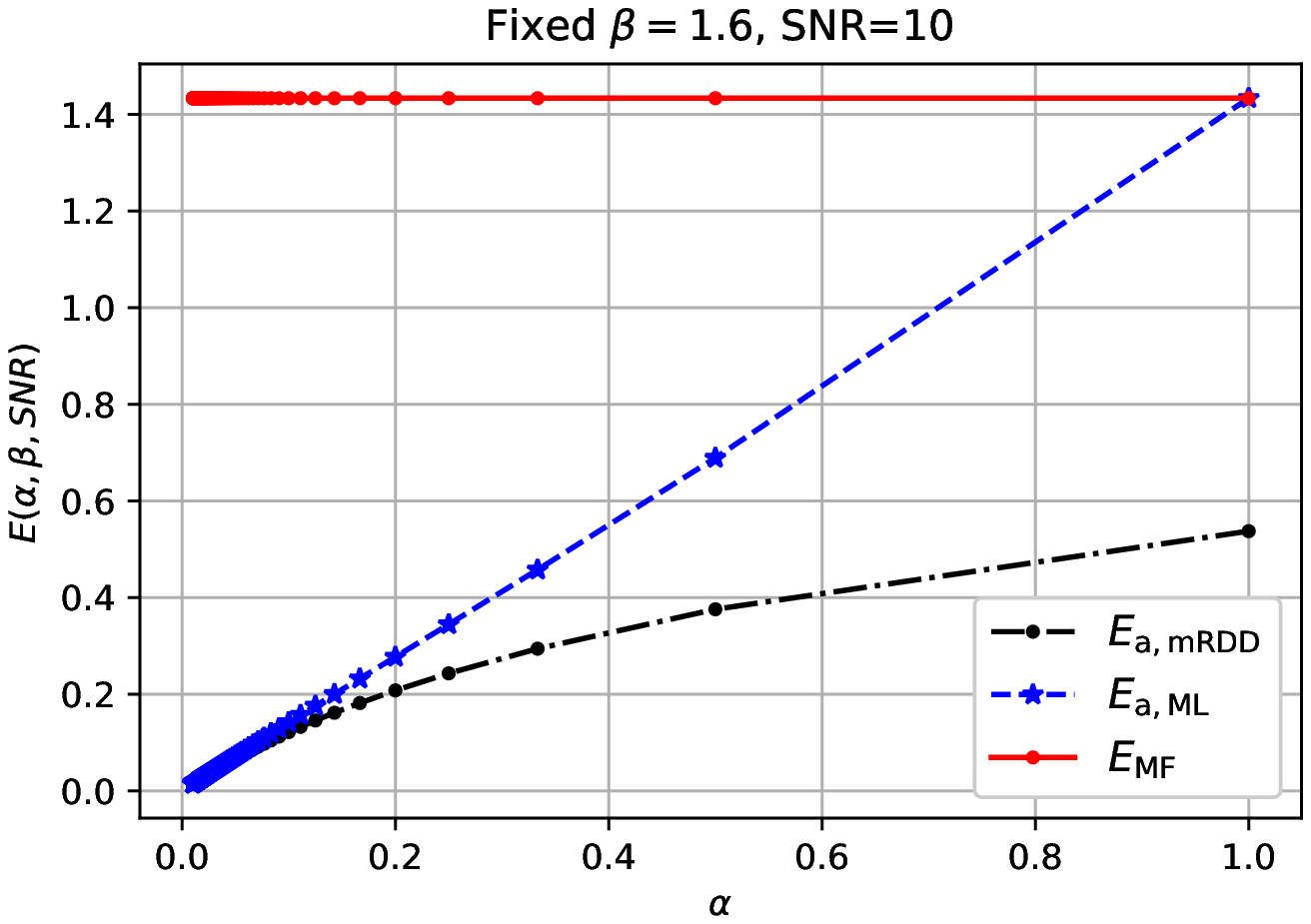}} 
	\subfigure[]{
		\label{Fig.sub.2}
		\includegraphics[width=0.48\textwidth]{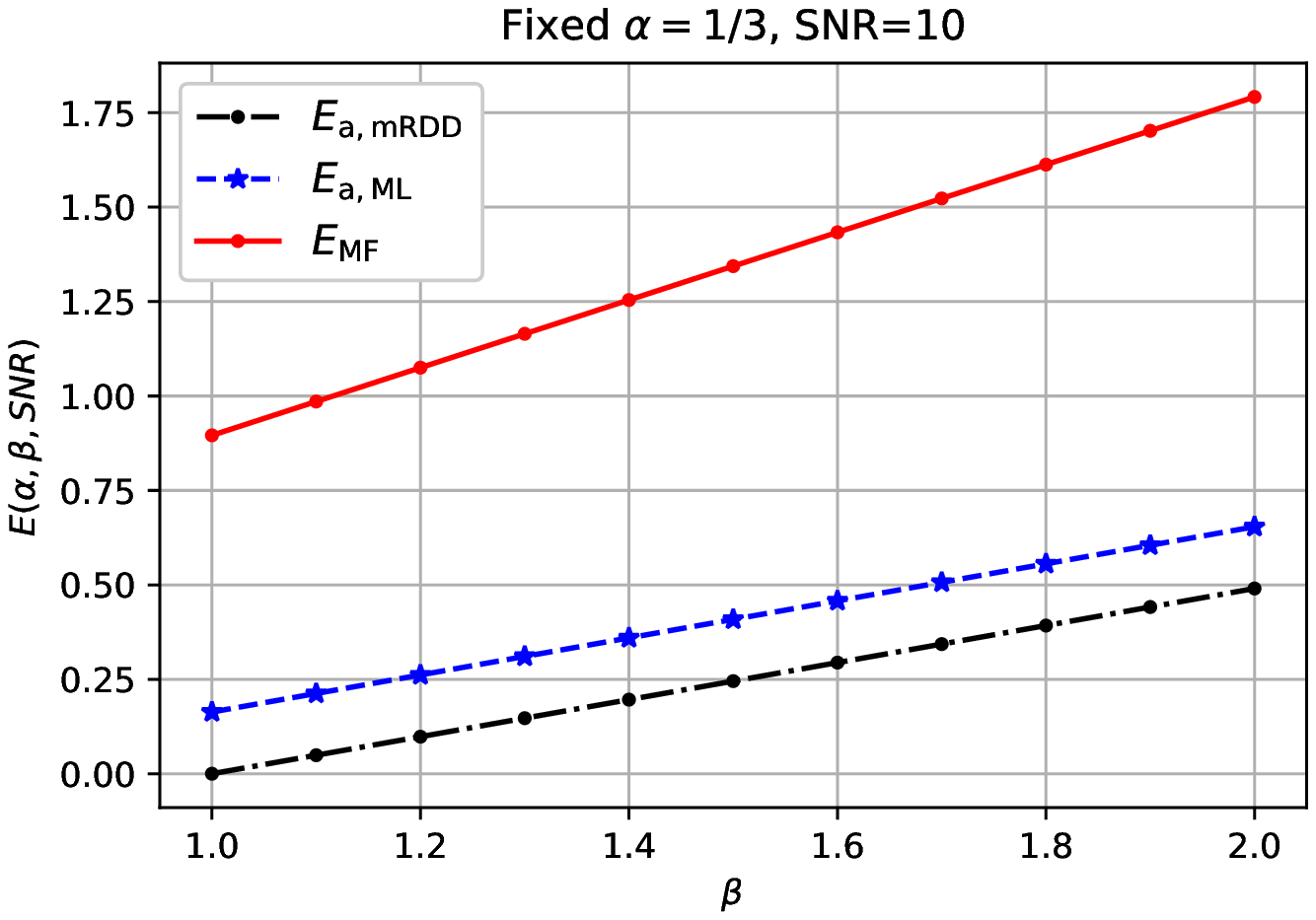}}
	\subfigure[]{
		\label{Fig.sub.3}
		\includegraphics[width=0.48\textwidth]{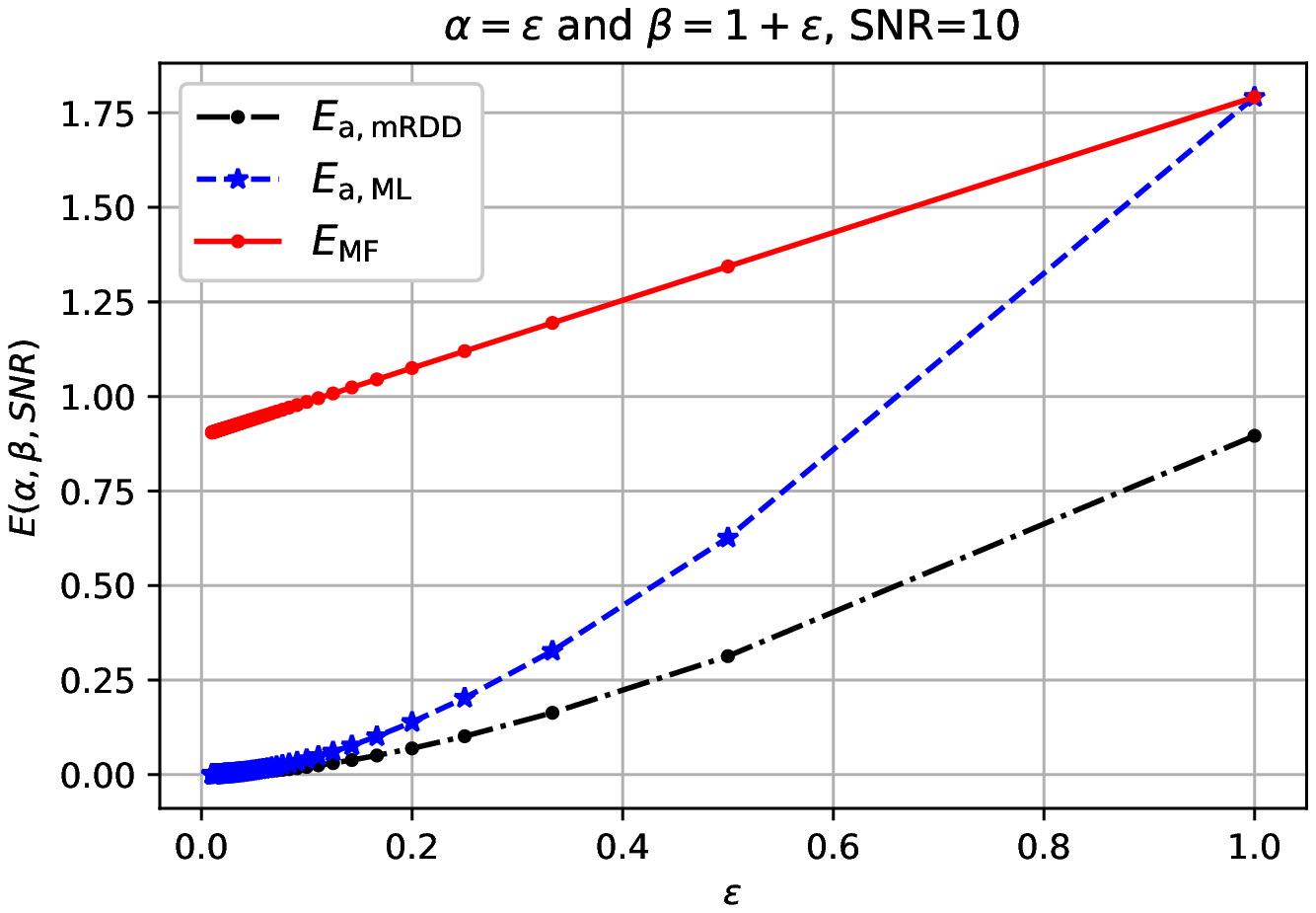}} 
	\subfigure[]{
		\label{Fig.sub.4}
		\includegraphics[width=0.48\textwidth]{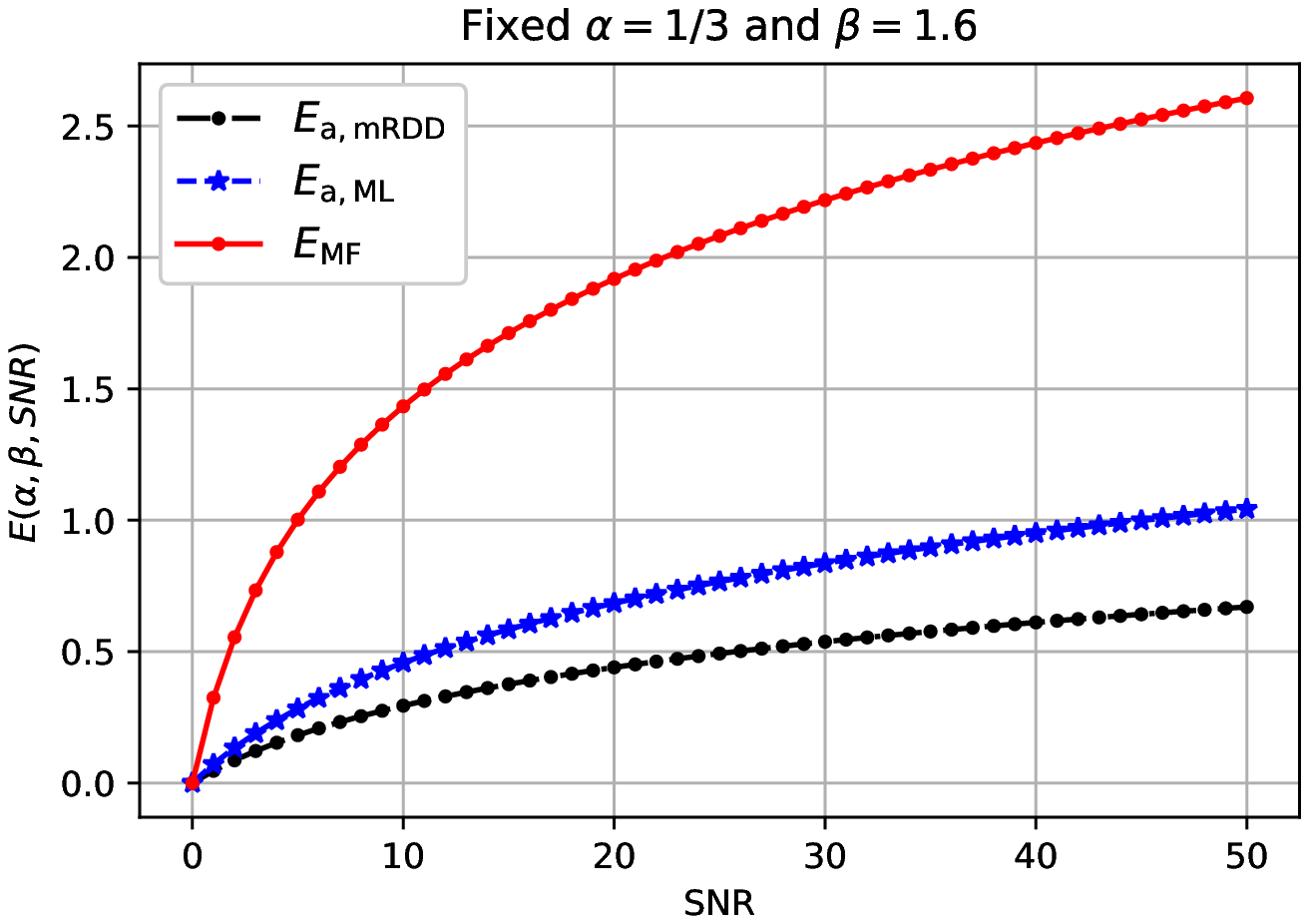}}
	\caption{Performance of the three error exponents. The results are valid only for $\alpha$ such that $1/\alpha\in\mathbb{N}$ but we linearly interpolate to obtain continuous curves.}
	\label{Fig.examples}
\end{figure*}

The following corollary formalizes how the two error exponents we derived depend on the various parameters $\beta$, $\alpha$ and $\mathrm{SNR}$ as they tend to their limiting values. 
\begin{corollary}
	\label{cor}
The following hold:
	\begin{itemize}
	\item[(a)] For a fixed transmission rate $\beta$,  
	\begin{align*}
	E_{\mathrm{a},\mathrm{ML}}(\beta,\alpha,\mathrm{SNR}) &= \calO(\alpha),\\*
	E_{\mathrm{a},\mathrm{mRDD}}(\beta,\alpha,\mathrm{SNR})&=\calO(\alpha), \quad \mbox{as } \alpha\to 0^+, 1/\alpha\in\mathbb{N}.
	\end{align*} 
	\item[(b)] For a fixed compression rate $\alpha$ such that $1/\alpha\in\mathbb{N}$ and SNR, when $\beta=1+\epsilon,\epsilon\in[0,1]$, 
	\begin{align*}
	E_{\mathrm{a},\mathrm{ML}}(\beta,\alpha,\mathrm{SNR}) &=\calO(\epsilon),\\*
	E_{\mathrm{a},\mathrm{mRDD}}(\beta,\alpha,\mathrm{SNR})&=\calO(\epsilon), \quad \mbox{as } \epsilon\to 0^+.
	\end{align*} 
	\item[(c)] For a fixed SNR, when $\alpha=\epsilon, \beta=1+\epsilon, \epsilon\in[0,1]$, and $ 1/\epsilon\in\mathbb{N}$,
	\begin{align*}
	E_{\mathrm{a},\mathrm{ML}}(\beta,\alpha,\mathrm{SNR})&=\calO(\epsilon^2),\\*
	E_{\mathrm{a},\mathrm{mRDD}}(\beta,\alpha,\mathrm{SNR})&=\calO(\epsilon^2), \quad \mbox{as } \epsilon\to 0^+.
	\end{align*} 
	\item[(d)] For a fixed transmission rate $\beta$ and compression rate $\alpha$ such that $1/\alpha\in\mathbb{N}$, 
	\begin{align*}
	E_{\mathrm{a},\mathrm{ML}}(\beta,\alpha,\mathrm{SNR}) &=\calO(\log({\mathrm{SNR}})),\\*
	E_{\mathrm{a},\mathrm{mRDD}}(\beta,\alpha,\mathrm{SNR})&=\calO(\log({\mathrm{SNR}})),\quad\mbox{as } \mathrm{SNR}\to \infty.
	\end{align*} 
	and
	\begin{align*}
	E_{\mathrm{a},\mathrm{ML}}(\beta,\alpha,\mathrm{SNR})&=\frac{\alpha(\beta-1+\alpha)}{2}\cdot \mathrm{SNR}+ o(\mathrm{SNR}),\\*
	E_{\mathrm{a},\mathrm{mRDD}}(\beta,\alpha,\mathrm{SNR})&=\frac{\alpha(\beta-1)}{2} \cdot \mathrm{SNR}+ o(\mathrm{SNR}), \\
	&\hspace{1in}\quad\mbox{as } \mathrm{SNR}\to 0^+.
	\end{align*}
	\end{itemize}
\end{corollary}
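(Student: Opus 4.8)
The plan is to observe that, once Theorems~\ref{compressed_exp} and~\ref{compressed_rdd} are in hand, the corollary carries no further probabilistic content: both exponents are the explicit closed forms in~\eqref{eqn:ml_exp} and~\eqref{eqn:mrdd_exp}, so every assertion reduces to an elementary asymptotic analysis of the two real-valued functions
\begin{equation*}
E_{\mathrm{a},\mathrm{ML}}=\frac{\beta-1+\alpha}{2}\log\Big(1+\frac{\alpha\,\mathrm{SNR}}{2}\Big),\qquad E_{\mathrm{a},\mathrm{mRDD}}=\frac{\beta-1}{2}\log\Big(1+\frac{\alpha\,\mathrm{SNR}}{2}\Big).
\end{equation*}
The only analytic facts I would invoke throughout are the expansion $\log(1+x)=x+\calO(x^2)$ as $x\to 0^+$ and $\log(1+x)=\log x+\calO(1)$ as $x\to\infty$; everything else is bookkeeping of which factor is the small one in each regime.

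For part~(a) I would fix $\beta$ and let $\alpha\to 0^+$. The prefactors $\tfrac{\beta-1+\alpha}{2}$ and $\tfrac{\beta-1}{2}$ stay bounded, while $\log(1+\tfrac{\alpha\,\mathrm{SNR}}{2})=\tfrac{\alpha\,\mathrm{SNR}}{2}+\calO(\alpha^2)$, so each exponent is a bounded quantity times something of order $\alpha$, giving the claimed $\calO(\alpha)$ scaling (and, when $\beta>1$, genuine linear decay, matching Fig.~\ref{Fig.sub.1}). For part~(b) I would substitute $\beta=1+\epsilon$: both exponents are \emph{affine} in $\beta$, hence their dependence on $\epsilon=\beta-1$ is exactly linear, which is precisely the $\calO(\epsilon)$ statement; for the modified RDD the value at $\epsilon=0$ is zero, so it in fact vanishes linearly in $\epsilon$. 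Part~(c) is the one regime in which two small parameters compound: setting $\alpha=\epsilon$ and $\beta=1+\epsilon$ makes the ML prefactor equal to $\epsilon$ and the mRDD prefactor equal to $\tfrac{\epsilon}{2}$, while the logarithmic factor is itself $\Theta(\epsilon)$; the product of a $\Theta(\epsilon)$ prefactor with a $\Theta(\epsilon)$ logarithm is $\Theta(\epsilon^2)$, yielding the quadratic behaviour.

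Part~(d) splits into the two SNR limits. As $\mathrm{SNR}\to\infty$ I would write $\log(1+\tfrac{\alpha\,\mathrm{SNR}}{2})=\log\mathrm{SNR}+\log\tfrac{\alpha}{2}+o(1)$, so with $\alpha,\beta$ fixed both exponents are a constant multiple of $\log\mathrm{SNR}$ plus lower-order terms, i.e.\ $\calO(\log\mathrm{SNR})$. As $\mathrm{SNR}\to 0^+$ I would instead use $\log(1+\tfrac{\alpha\,\mathrm{SNR}}{2})=\tfrac{\alpha\,\mathrm{SNR}}{2}+o(\mathrm{SNR})$ and multiply by the (now constant) prefactors to read off the linear-in-$\mathrm{SNR}$ leading terms together with their coefficients, absorbing the quadratic remainder into $o(\mathrm{SNR})$. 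I do not anticipate a genuine obstacle: the only points requiring care are tracking which prefactor is bounded versus vanishing in each regime (so that the stated orders are the \emph{exact} orders, not merely upper bounds) and respecting the admissibility constraints $1/\alpha\in\mathbb{N}$ and, in part~(c), $1/\epsilon\in\mathbb{N}$, which merely restrict each limit to a discrete sequence of parameter values and do not affect the expansions.
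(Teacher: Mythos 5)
Your approach is exactly the paper's: Corollary~\ref{cor} is stated without a separate proof precisely because it is a direct consequence of the closed forms in Theorems~\ref{compressed_exp} and~\ref{compressed_rdd}, and the elementary expansions of $\log(1+x)$ in each regime are all that is needed. Parts (a), (c), and the large-$\mathrm{SNR}$ half of (d) go through exactly as you describe.

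Two points need flagging, however. First, in the small-$\mathrm{SNR}$ half of part (d), actually carrying out the computation you outline gives
\begin{align*}
E_{\mathrm{a},\mathrm{ML}}(\beta,\alpha,\mathrm{SNR})=\frac{\beta-1+\alpha}{2}\cdot\frac{\alpha\,\mathrm{SNR}}{2}+o(\mathrm{SNR})=\frac{\alpha(\beta-1+\alpha)}{4}\,\mathrm{SNR}+o(\mathrm{SNR}),
\end{align*}
and similarly $\frac{\alpha(\beta-1)}{4}\,\mathrm{SNR}+o(\mathrm{SNR})$ for the modified RDD. These differ by a factor of $2$ from the coefficients printed in the corollary, namely $\frac{\alpha(\beta-1+\alpha)}{2}$ and $\frac{\alpha(\beta-1)}{2}$. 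Your method is sound and the printed coefficients appear to be a typo in the paper, but since your proposal promises to ``read off the coefficients,'' you must actually do so and confront the discrepancy: as written, your argument establishes coefficients half as large as those in the literal statement. Second, in part (b) the ML exponent does \emph{not} vanish as $\epsilon\to0^+$: it tends to $\frac{\alpha}{2}\log\left(1+\frac{\alpha\,\mathrm{SNR}}{2}\right)>0$, so ``$E_{\mathrm{a},\mathrm{ML}}=\calO(\epsilon)$'' is false under the standard reading of $\calO(\cdot)$ and can only mean, as you correctly interpret, that the dependence on $\epsilon$ is affine (the exponent decreases linearly toward this positive limit; only the mRDD exponent is genuinely $\Theta(\epsilon)$). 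Making that interpretation explicit is necessary for your write-up to match the statement being proved rather than silently proving something slightly different.
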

Corollary~\ref{cor}(a) implies that for a fixed transmission rate $\beta$ and $\mathrm{SNR}$, when $\alpha$  tends to zero, the two error exponents for the approximate sufficient statistics tend to zero linearly fast. This is not unexpected as increasing the number of approximate statistics linearly provides us with a commensurate amount of information. Corollary~\ref{cor}(b) implies that for a fixed $\alpha$ and $\mathrm{SNR}$, the exponents decrease linearly as $\beta$ tends to 1. This is also natural as we have fewer observations as $T/M$ tends towards $1$ from above.  Corollary~\ref{cor}(c) shows that for a fixed $\mathrm{SNR}$, the two error exponents for the approximate sufficient statistics decreases quadratically as $\alpha$ goes to $0$ and $\beta$ goes to $1$ simultaneously. Corollary~\ref{cor}(d) indicates that the dependences of the three error exponents on  $\mathrm{SNR}$ are similar to that of the capacity of Gaussian channels \cite[Chap.~7]{Cov06}, which is also plausible. In fact, Corollary~\ref{cor}(d) clearly shows the improvement of $E_{\mathrm{a},\mathrm{ML}}(\beta,\alpha,\mathrm{SNR})$ over  $E_{\mathrm{a},\mathrm{mRDD}}(\beta,\alpha,\mathrm{SNR})$ at small $\mathrm{SNR}$ and $\alpha>0$.

\section{Proof of Main Results} \label{sec:proof}
In this section, we provide the proofs of Theorem~\ref{compressed_exp} and Theorem~\ref{compressed_rdd}. We first describe the high-level idea of the proofs. By conditioning on the signals $\{ s_i  \}_{i=1}^M$, we derive a bound on the conditional error probability. Then using certain tools from probability theory~\cite{wishart}, we simplify this conditional probability by averaging over the random signals $\{ s_i  \}_{i=1}^M$. We now introduce some preliminary tools and lemmas that are used extensively in  our proofs.

\subsection{Preliminaries} \label{sec:preli}
Let $\{X_1,X_2,\ldots,X_n\}$ be a sequence of i.i.d.\ zero-mean Gaussian random vectors with  covariance matrix $\Sigma\in\mathbb{R}^{p\times p}$. Let $\mathbf{X}=\sum_{i=1}^nX_iX_i^T$. The probability distribution function of $\mathbf{X}$ is called a {\em Wishart distribution} with covariance matrix $\Sigma$, dimension $n$, and degrees of freedom $p$ and will be denoted as $\mathrm{W}_p(\Sigma,n)$. Lemma~\ref{wish} below characterizes the probability distribution function of $(\bA\bG^{-1}\bA^T)^{-1}$ when $\bA$ is a deterministic matrix with full row rank. The proof can be found in~\cite[Prop.~8.9]{wishart}.
\begin{lemma}
	\label{wish}
	Suppose $\bA\in\bbR^{N\times M}$ with $N\leq M$ is deterministic and has full row rank. Let $T\geq M$ and 
$\bG$ be an $M\times M$  Gram matrix with $[\bG]_{i,j}=\langle s_i(\cdot),s_j(\cdot) \rangle$, where $\{s_i(t)\}_{i=1,t=1}^{M,T}$ are i.i.d.\ zero-mean Gaussian with variance $\calE^2$. Then 
\begin{align}
\label{Wish}
(\bA\bG^{-1}\bA^T)^{-1}\sim\mathrm{W}_N(\calE^2(\bA\bA^T)^{-1},T-M+N).
\end{align}
\end{lemma}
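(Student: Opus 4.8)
The plan is to identify $\bG$ as a standard (identity-scaled) Wishart matrix and then to reduce the claim, using the equivariance of the Wishart family under linear maps, to a single clean statement about the Schur complement of such a matrix. First I would record the representation $\bG=\bS\bS^T$, where $\bS\in\bbR^{M\times T}$ has entries $[\bS]_{i,t}=s_i(t)$; since the columns $\bs_1,\ldots,\bs_T$ of $\bS$ are i.i.d.\ $\mathcal N(\bzero,\calE^2\bI_M)$, we have $\bG=\sum_{t=1}^{T}\bs_t\bs_t^T\sim\mathrm{W}_M(\calE^2\bI_M,T)$. I would also invoke two elementary equivariance facts, both immediate from this representation and the rotational invariance of $\mathcal N(\bzero,\calE^2\bI_M)$: if $\bK\sim\mathrm{W}_M(\Sigma,n)$ then $\bB\bK\bB^T\sim\mathrm{W}_M(\bB\Sigma\bB^T,n)$ for conformable $\bB$, and $\mathrm{W}_M(\calE^2\bI_M,T)$ is invariant under $\bK\mapsto\bO^T\bK\bO$ for any orthogonal $\bO$.

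Second, using the singular value decomposition I would write the full-row-rank matrix $\bA$ as $\bA=\bR\,[\bI_N\;\bzero]\,\bO^T$ with $\bR\in\bbR^{N\times N}$ invertible and $\bO\in\bbR^{M\times M}$ orthogonal, so that $\bA\bA^T=\bR\bR^T$. Writing $\bG'=\bO^T\bG\bO$, which is again distributed as $\mathrm{W}_M(\calE^2\bI_M,T)$ by the invariance above, and using $\bO^T\bG^{-1}\bO=(\bG')^{-1}$, a short computation gives the identity $(\bA\bG^{-1}\bA^T)^{-1}=\bR^{-T}\big[((\bG')^{-1})_{11}\big]^{-1}\bR^{-1}$, where $(\cdot)_{11}$ denotes the leading $N\times N$ block. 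By the standard block-inverse identity, $\big[((\bG')^{-1})_{11}\big]^{-1}$ equals the Schur complement $\bG'_{11\cdot2}:=\bG'_{11}-\bG'_{12}(\bG'_{22})^{-1}\bG'_{21}$. Hence, once I establish that $\bG'_{11\cdot2}\sim\mathrm{W}_N(\calE^2\bI_N,T-M+N)$, the first equivariance fact yields $(\bA\bG^{-1}\bA^T)^{-1}\sim\mathrm{W}_N(\calE^2\bR^{-T}\bR^{-1},T-M+N)=\mathrm{W}_N(\calE^2(\bA\bA^T)^{-1},T-M+N)$, which is exactly the claim.

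The crux is therefore the Schur-complement lemma for a standard Wishart matrix, which I would prove directly from the Gaussian factorization. Since $\bG'\sim\mathrm{W}_M(\calE^2\bI_M,T)$, write $\bG'=\bS\bS^T$ with $\bS\in\bbR^{M\times T}$ having i.i.d.\ $\mathcal N(0,\calE^2)$ entries, and partition $\bS=[\bS_1^T\;\bS_2^T]^T$ with $\bS_1\in\bbR^{N\times T}$ and $\bS_2\in\bbR^{(M-N)\times T}$. Then $\bG'_{11\cdot2}=\bS_1(\bI_T-\bP_2)\bS_1^T$, where $\bP_2=\bS_2^T(\bS_2\bS_2^T)^{-1}\bS_2$ is the orthogonal projection onto the row space of $\bS_2$, which has rank $M-N$ almost surely because $T\ge M$. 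Conditioning on $\bS_2$ and writing $\bI_T-\bP_2=\bE\bE^T$ for some $\bE\in\bbR^{T\times(T-M+N)}$ with orthonormal columns gives $\bG'_{11\cdot2}=(\bS_1\bE)(\bS_1\bE)^T$. Because $\bS_1$ is independent of $\bS_2$ (hence of $\bE$) and its rows are i.i.d.\ $\mathcal N(\bzero,\calE^2\bI_T)$, the orthonormality of the columns of $\bE$ shows that, conditionally on $\bS_2$, the matrix $\bS_1\bE\in\bbR^{N\times(T-M+N)}$ again has i.i.d.\ $\mathcal N(0,\calE^2)$ entries; therefore $(\bS_1\bE)(\bS_1\bE)^T\sim\mathrm{W}_N(\calE^2\bI_N,T-M+N)$ conditionally on $\bS_2$. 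Since this conditional law does not depend on $\bS_2$, it is also the unconditional law, completing the lemma.

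I expect the Schur-complement step to be the main obstacle: the delicate point is to justify cleanly that conditioning on $\bS_2$ lowers the effective number of Gaussian degrees of freedom from $T$ to $T-M+N$ while preserving the standard-Wishart form. This is precisely where the hypotheses $N\le M$ and $T\ge M$ are needed, so that $\bS_2\bS_2^T$ is invertible and $\bP_2$ has the correct rank almost surely; it is the content of \cite[Prop.~8.9]{wishart}.
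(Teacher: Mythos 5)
Your proof is correct. Note, however, that the paper does not actually prove this lemma at all: it simply cites \cite[Prop.~8.9]{wishart}, where the general statement $(\bA\bS^{-1}\bA^T)^{-1}\sim \mathrm{W}((\bA\Sigma^{-1}\bA^T)^{-1}, n-p+r)$ for $\bS\sim\mathrm{W}(\Sigma,n)$ is established; the lemma is the special case $\Sigma=\calE^2\bI_M$. What you have done is reconstruct, from first principles, essentially the textbook proof of that proposition in the identity-covariance case: the SVD reduction $\bA=\bR[\bI_N\;\bzero]\bO^T$, the block-inverse identity equating $[((\bG')^{-1})_{11}]^{-1}$ with the Schur complement, and the conditioning argument showing $\bS_1\bE$ has i.i.d.\ Gaussian entries given $\bS_2$ (which is where the degrees of freedom drop from $T$ to $T-M+N$) are exactly the ingredients behind the cited result, with orthogonal invariance of $\mathcal{N}(\bzero,\calE^2\bI_M)$ substituting for the general-$\Sigma$ bookkeeping. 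The citation buys brevity; your version buys self-containedness and makes visible precisely where each hypothesis enters (full row rank so that $\bR$ is invertible, $T\ge M$ so that $\bG$ and $\bS_2\bS_2^T$ are almost surely invertible). Two cosmetic points: in your generic equivariance statement the subscript should track the row dimension of $\bB$ (i.e., $\bB\bK\bB^T\sim\mathrm{W}_N(\cdot)$ when $\bB$ has $N$ rows), as you in fact use it; and the Schur-complement step alone is not quite ``the content of Prop.~8.9''---in \cite{wishart} that is a separate, earlier proposition, with Prop.~8.9 being the full statement you are proving. Neither affects correctness.
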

\noindent In~Lemma~\ref{wish}, the assumption $T\geq M$ is used to guarantee the (almost sure) invertibility of the random matrix $\bG$.

We assume that the columns of $\bA$ are normalized, i.e., $\|\ba_i\|_2=1$ for $i\in\{1,2,\ldots,M\}$. The {\em coherence} of $\bA$ is defined as 
\begin{align}
\mu:=\max_{1\leq i\neq j \leq M} |\ba_i^T\ba_j|.  \label{eqn:coherence1}
\end{align}
\begin{definition}
	The $s$-th restricted isometry constant $\delta_s=\delta_s(\bA)$ of a matrix $\bA\in\mathbb{R}^{N\times M}$ is the smallest $\delta\geq0$ such that
	\begin{align*}
	(1-\delta)\|\bx\|_2^2\leq \|\bA\bx\|_2^2\leq (1+\delta)\|\bx\|_2^2
	\end{align*}
	holds for all $s$-sparse vectors $\bx\in\mathbb{R}^M$. Equivalently, $\delta_s$ can be expressed as 
	\begin{align*}
	\delta_s=\max_{S\subset[M],\mathrm{card}(S)\leq s} \|\bA_S^T\bA_S-\bI\|_{2\to2},
	\end{align*}
	where $\|\cdot\|_{2\to2}$ is the spectral norm.
\end{definition}
As usual, we denote $\bb_i$ as a vector with its $i$-th element equal to 1 and others equal to zero. Denote the maximum and minimum eigenvalues of $\bA\bA^T$ as $\lambda_N$  and~$\lambda_1$ respectively. 
\begin{lemma}
	If the $2$-nd restricted isometry constant of $\bA$ is $\delta_2$, then the following inequalities hold: 
	\begin{align}
	\frac{2(1-\delta_2)}{\lambda_N}\leq\|(\bA\bA^T)^{-1/2}\bA(\bb_i-\bb_j)\|_2^2\leq \frac{2(1+\delta_2)}{\lambda_1}.\label{constraint2}
	\end{align}
\end{lemma}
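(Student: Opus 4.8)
The plan is to split the quadratic form in \eqref{constraint2} into two pieces that can be controlled independently: the action of $(\bA\bA^T)^{-1}$, which is governed by the extreme eigenvalues of $\bA\bA^T$, and the squared length of $\bA(\bb_i-\bb_j)$, which is governed by the restricted isometry property applied to the $2$-sparse vector $\bb_i-\bb_j$.

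First I would set $\bx=\bb_i-\bb_j$ and $\by=\bA\bx=\ba_i-\ba_j\in\mathbb{R}^N$, and rewrite the target quantity as the Rayleigh-type quadratic form
\[
\|(\bA\bA^T)^{-1/2}\bA(\bb_i-\bb_j)\|_2^2=\by^T(\bA\bA^T)^{-1}\by .
\]
Since $\bA$ has full row rank, $\bA\bA^T$ is symmetric positive definite, so $(\bA\bA^T)^{-1}$ has extreme eigenvalues $1/\lambda_1$ and $1/\lambda_N$, where $\lambda_1$ and $\lambda_N$ denote the minimum and maximum eigenvalues of $\bA\bA^T$. The standard Rayleigh quotient bounds then give
\[
\frac{\|\by\|_2^2}{\lambda_N}\le \by^T(\bA\bA^T)^{-1}\by\le \frac{\|\by\|_2^2}{\lambda_1}.
\]

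Second, I would bound $\|\by\|_2^2=\|\bA(\bb_i-\bb_j)\|_2^2$. Because $\bb_i-\bb_j$ is $2$-sparse (for $i\neq j$) with $\|\bb_i-\bb_j\|_2^2=2$, the defining property of the $2$-nd restricted isometry constant $\delta_2$ yields
\[
2(1-\delta_2)\le \|\bA(\bb_i-\bb_j)\|_2^2\le 2(1+\delta_2).
\]
Substituting the upper bound on $\|\by\|_2^2$ into the upper Rayleigh estimate and the lower bound on $\|\by\|_2^2$ into the lower Rayleigh estimate—which is valid since $1/\lambda_N$ and $1/\lambda_1$ are positive, so both chains point in compatible directions—produces exactly \eqref{constraint2}.

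There is no substantial obstacle here: the result is a direct composition of the eigenvalue bound and the RIP bound. The only points that require care are (i) correctly inverting the eigenvalue ordering, so that the \emph{maximum} eigenvalue $\lambda_N$ of $\bA\bA^T$ controls the \emph{lower} bound while the \emph{minimum} eigenvalue $\lambda_1$ controls the \emph{upper} bound, and (ii) checking that the two inequality chains are concatenated in the consistent direction before combining them.
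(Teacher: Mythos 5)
Your proof is correct and follows essentially the same route as the paper's: both separate the quadratic form into an eigenvalue bound for $\bA\bA^T$ and an RIP bound on $\|\bA(\bb_i-\bb_j)\|_2^2$ for the $2$-sparse vector $\bb_i-\bb_j$. The only cosmetic difference is that you invoke the Rayleigh-quotient sandwich on $\by^T(\bA\bA^T)^{-1}\by$, which yields both directions symmetrically in one step, whereas the paper applies spectral-norm submultiplicativity twice and must rearrange an inequality of the form $\|\bA\bx\|_2^2\leq\|(\bA\bA^T)^{1/2}\|_2^2\,\|(\bA\bA^T)^{-1/2}\bA\bx\|_2^2$ to obtain the lower bound.
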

\begin{proof}
	 We have
	\begin{align*}
	\|(\bA\bA^T)^{-1/2}\bA(\bb_i-\bb_j)\|_2^2&\leq\|(\bA\bA^T)^{-1/2}\|_2^2\|\bA(\bb_i-\bb_j)\|^2_2\\*
	&\leq \frac{2(1+\delta_2)}{\lambda_1},
	\end{align*}
	which proves the upper bound in~\eqref{constraint2}. 
	For the lower bound, because
	\begin{align*}
	\|\bA(\bb_i-\bb_j)\|_2^2\leq \|(\bA\bA^T)^{1/2}\|_2^2\|(\bA\bA^T)^{-1/2}\bA(\bb_i-\bb_j)\|^2_2,
	\end{align*}
	we have
	\begin{align*}
	\|(\bA\bA^T)^{-1/2}\bA(\bb_i-\bb_j)\|^2_2&\geq \frac{\|\bA(\bb_i-\bb_j)\|_2^2}{\|(\bA\bA^T)^{1/2}\|_2^2}\\*
	&\geq \frac{2(1-\delta_2)}{\lambda_N},
	\end{align*}
	as desired.
\end{proof}

	Note that based on~\cite[Proposition 6.2]{Foucart}, $\delta_2=\mu$. Thus, \eqref{constraint2} is equivalent to
	\begin{align}
	\frac{2(1-\mu)}{\lambda_N}\leq\|(\bA\bA^T)^{-1/2}\bA(\bb_i-\bb_j)\|_2^2\leq \frac{2(1+\mu)}{\lambda_1}.\label{constraint3}
	\end{align}

Now we choose a column-normalized group Hadamard matrix as our sensing matrix $\bA$. This class of matrices was proposed and analyzed by Thill and Hassibi~\cite{thill}. We introduce how to construct a group Hadamard matrix and state its properties.  For any prime $p$, let $\mathbb{F}_p$ be the ring of integer residual modulo $p$. Let $\mathbb{F}_{p^r}$ be the extension field of $\mathbb{F}_p$ with $p^r$ elements. We denote the elements of $\mathbb{F}_{p^r}$ as $\{x_1,\ldots,x_M\}$ where $M=p^r$. Let $H=\mathbb{F}_{p^r}^{\times}=\mathbb{F}_{p^r}\setminus\{0\}$. It is well known from the theory of  finite fields~\cite{finitefield} that $H$ is isomorphic to the cyclic group of size $p^r-1$. Let $\{a_1,\ldots,a_N\}$ be any subgroup of $H$.  This is a cyclic group of size $N$, where $N$ is a divisor of $p^r-1$. Since $H$ is cyclic, there is a unique subgroup for each $N$, and it consists of the $\big(\frac{p^r-1}{N}\big)^{\mathrm{th}}$ powers in $H$. Thus, if $x$ is a cyclic generator of $H$, we may set $y=x^{\frac{p^r-1}{N}}$ and $a_i=y^i$ for each $i=1,\ldots,N$.
 We let $\bM_p$ be the frame matrix defined as
\begin{align}
\label{frame}
\bM_p:=
\left[
\begin{matrix}
\omega^{\mathrm{Tr}(a_1x_1)}&\omega^{\mathrm{Tr}(a_1x_2)}&\ldots \omega^{\mathrm{Tr}(a_1x_M)}\\
\vdots&\vdots&\vdots\\
\omega^{\mathrm{Tr}(a_Nx_1)}&\omega^{\mathrm{Tr}(a_Nx_2)}&\ldots \omega^{\mathrm{Tr}(a_Nx_M)}\\
\end{matrix}
\right],
\end{align}
where $\omega=e^{\frac{2\pi i}{p}}$ and $\mathrm{Tr}(x)=x+x^p+\ldots+x^{p^r-1}$. If $p=2$, the frame matrix $\bM_p=\bM_2$ reduces to a quantity known as a {\em group Hadamard matrix}. This matrix contains entries that are all equal to $\pm 1$ and hence $\bM_2$ is a real-valued matrix. Then we choose the sensing matrix $$\bA=\frac{1}{\sqrt{N}}\bM_2,$$ i.e., a {\em column-normalized group Hadamard matrix}. Because the set of rows of $\bA$ forms a subset of the rows of an $M\times M$ Hadamard matrix and its columns are normalized, we have the row orthonormality property, i.e,. $$\bA\bA^T=\frac{M}{N}\bI.$$ The following lemma quantifies  the coherence of a column-normalized group Hadamard matrix~\cite[Theorem~8]{thill}. 

\begin{lemma}
\label{coherence}
If $M$ is a power of two, $N$ a divisor of $M-1$, and $\{a_i\}_{i=1}^N$ the elements of the unique subgroup of $\mathbb{F}_{2^r}^{\times}$ of size $N$, then setting $\omega:=e^{\pi i}$, and $\kappa:=\frac{M-1}{N}$ in~\eqref{frame}, the coherence $\mu$ of the column-normalized group Hadamard matrix $\bA$ (as constructed above) satisfies
\begin{align}
\label{eqn:coherence}
\mu\leq \frac{1}{\kappa}\left((\kappa-1)\sqrt{\frac{1}{N}\left(\kappa+\frac{1}{N}\right)}+\frac{1}{N}\right).
\end{align}
\end{lemma}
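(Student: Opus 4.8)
The plan is to reduce the coherence to a character sum over the multiplicative subgroup $\{a_i\}_{i=1}^N$ and then invoke the classical Gauss-sum modulus estimate. Write $\psi(x) := \omega^{\mathrm{Tr}(x)} = (-1)^{\mathrm{Tr}(x)}$ for the canonical additive character of $\bbF_{2^r}$ (recall $\omega = e^{\pi i} = -1$ and $p=2$), and let $A := \{a_i\}_{i=1}^N$ be the unique subgroup of $H = \bbF_{2^r}^{\times}$ of order $N$. The column of $\bA$ indexed by $x_k$ has $i$-th entry $\frac{1}{\sqrt N}\psi(a_i x_k)$, so for two distinct columns indexed by $x_k \neq x_l$, additivity of the trace together with characteristic two give
\begin{align*}
\ba_k^T\ba_l = \frac{1}{N}\sum_{i=1}^N \psi\bigl(a_i(x_k+x_l)\bigr) = \frac{1}{N}\sum_{a\in A}\psi(az),
\end{align*}
where $z := x_k + x_l$ is a nonzero element of $H$. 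As $(k,l)$ ranges over distinct pairs, $z$ ranges over all of $H$, so the coherence equals $\mu = \frac{1}{N}\max_{z\in H}\bigl|\sum_{a\in A}\psi(az)\bigr|$, and it suffices to bound $S(z) := \sum_{a\in A}\psi(az)$ uniformly over $z\in H$.

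First I would detect membership in $A$ using multiplicative characters. Since $H$ is cyclic and $A$ has index $\kappa = (M-1)/N$, the annihilator $\Gamma := \{\chi \in \hat H : \chi|_A \equiv 1\}$ has exactly $\kappa$ elements, and orthogonality on the quotient $H/A$ gives $\mathbf{1}[h\in A] = \frac{1}{\kappa}\sum_{\chi\in\Gamma}\chi(h)$ for every $h\in H$. Substituting this into $S(z)$ and exchanging the order of summation yields
\begin{align*}
\kappa\, S(z) = \sum_{\chi\in\Gamma}\sum_{h\in H}\chi(h)\psi(hz).
\end{align*}
The trivial character contributes $\sum_{h\in H}\psi(hz) = -1$ (additive-character orthogonality over $\bbF_{2^r}$ minus the $h=0$ term), while for each nontrivial $\chi\in\Gamma$ the substitution $h\mapsto hz^{-1}$ produces a twisted Gauss sum $\sum_{h\in H}\chi(h)\psi(hz) = \bar\chi(z)\,g(\chi)$, where $g(\chi) = \sum_{h\in H}\chi(h)\psi(h)$.

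The key analytic input is the Gauss-sum modulus $|g(\chi)| = \sqrt{M}$ for every nontrivial multiplicative character $\chi$ of $\bbF_{2^r}$. Applying this together with the triangle inequality to the $\kappa - 1$ nontrivial terms gives $|\kappa S(z) + 1| \le (\kappa-1)\sqrt{M}$, hence $|S(z)| \le \bigl((\kappa-1)\sqrt{M}+1\bigr)/\kappa$. Dividing by $N$ and using the identity $M = N\kappa + 1$ to rewrite $\sqrt{M}/N = \sqrt{\tfrac{1}{N}\bigl(\kappa + \tfrac{1}{N}\bigr)}$ then produces exactly the claimed bound in~\eqref{eqn:coherence}.

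The hard part will be the Gauss-sum estimate $|g(\chi)| = \sqrt{M}$, which is the one genuinely nontrivial ingredient. In characteristic two this is clean: $\psi$ is real-valued, so $\overline{g(\chi)} = g(\bar\chi)$, and combined with the standard relation $g(\chi)g(\bar\chi) = \chi(-1)M = M$ (as $-1=1$) one gets $|g(\chi)|^2 = M$ directly. Everything else is bookkeeping with the two orthogonality relations and the arithmetic identity $M = N\kappa+1$. A secondary point to verify carefully is that $A$ is precisely the set of $\kappa$-th powers in $H$ and that $\Gamma$ has exactly $\kappa-1$ nontrivial elements, so that the triangle-inequality count is correct; this matches the argument of Thill and Hassibi~\cite{thill} for this ensemble.
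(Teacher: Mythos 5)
Your proof is correct. Note that the paper itself does not prove this lemma---it simply cites Theorem~8 of Thill and Hassibi~\cite{thill}---and your character-sum argument (reduce the coherence to the subgroup sum $S(z)=\sum_{a\in A}\psi(az)$, detect membership in $A$ via the $\kappa$ characters annihilating $A$, isolate the trivial-character contribution $-1$, and bound the $\kappa-1$ twisted Gauss sums by $\sqrt{M}$ each, finishing with $M=N\kappa+1$) is essentially the proof given in that reference, reconstructed accurately: all steps, including the final algebraic identity $\sqrt{M}/N=\sqrt{\tfrac{1}{N}(\kappa+\tfrac{1}{N})}$, check out.
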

This lemma states that $\mu$ for $\bA$ can be appropriately upper bounded; in the application of this lemma (in \eqref{eqn:limit} to follow), the upper bound vanishes as $N,M\to\infty$.

\subsection{Proof of Theorem~\ref{compressed_exp}}

To prove Theorem~\ref{compressed_exp}, we must show that~\eqref{mld_ach} and~\eqref{mld_cov} hold. Thus the proof is partitioned into two parts, the achievability in  \eqref{mld_ach} and the ensemble converse in  \eqref{mld_cov}.

\begin{proof}[Proof of~\eqref{mld_ach}]\label{prove:mld_ach}
{\color{black}Given the signals $\{s_i\}_{i=1}^M$, $\bu$ is a Gaussian random vector with mean $\bA\bb$ and covariance matrix $\sigma^2(\bA\bG^{-1}\bA^T)$.} Multiplying $[\bA\bG^{-1}\bA^T]^{-1/2}$ on both sides of~(\ref{eqn:ass}), we obtain that  
\begin{equation}
\tilde{\bu} =[\bA\bG^{-1}\bA^T]^{-1/2}\bA\bb+\bw_0, \notag
\end{equation}
where $\bw_0$ is   white Gaussian noise with zero mean and variance $\sigma^2$. Since under maximum likelihood decoding, the probabilities of error based on $\bu$ and $\tilde{\bu}$ are identical, we bound the detection error probability based on $\tilde{\bu}$ using   Gallager's technique to derive the random coding error exponent~\cite[Chap.~5]{gallager}.


Substituting the probability density function of $\tilde{\bu}$ into  the Gallager bound (the detailed derivation of which is shown in   Appendix~\ref{app:gallger}), we obtain that for $j\neq 1$, and any $0\le \rho\le 1$, 
\begin{align}
\label{gallager2}
&P_{\rm{a},\mathrm{ML}}(\mathrm{err}\mid \bG) \notag\\*
&\leq(M-1)^\rho \exp \bigg\{-\frac{\rho}{2\sigma^2(1+\rho)^2}\notag\\*
&\hspace{2.5cm}\times\left[\| (\bA\bG^{-1}\bA^T)^{-1/2} (\ba_j-\ba_1)\|^2\right.\notag\\*
&\hspace{2.5cm}\left.-(1-\rho)\| (\bA\bG^{-1}\bA^T)^{-1/2} \ba_j \|^2\right] \bigg\}.
\end{align}
We choose $\rho=1$, which turns out to be optimal asymptotically because $M$ grows linearly with $T$.
Then when $\rho=1$, the exponent of~\eqref{gallager2} is
$
\| (\bA\bG^{-1}\bA^T)^{-1/2} (\ba_j-\ba_1)\|^2=
(\ba_j-\ba_1)^T (\bA\bG^{-1}\bA^T)^{-1} (\ba_j-\ba_1).
$
From Lemma \ref{wish}, we have~\eqref{Wish}.
Then using the linear transformation property of the Wishart distribution \cite{wishart}, we have 
$$(\ba_j-\ba_1)^T(\bA\bG^{-1} \bA^T)^{-1}(\ba_j-\ba_1) \sim \sigma_1^2 \chi_{T-M+N}^2,$$ where $\sigma_1^2=\calE^2\|(\bA\bA^T)^{-1/2}\bA(\bb_i-\bb_j)\|_2^2$ and $\chi_{T-M+N}^2$ is the chi-squared distribution with $(T-M+N)$ degrees of freedom.\footnote{For $c>0$, we write $X\sim c\chi_m^2$ to mean that $X/c$ is a chi-squared random variable with $m$ degrees of freedom.}

Let $p(\cdot)$ denote the probability density function of the chi-squared random variable $X\sim\chi_{T-M+N}^2$. We then  obtain
\begin{align}
&P_{\rm{a},\mathrm{ML}}(\mathrm{err})\notag\\*
&\leq \!  M \int_{0}^\infty \! \exp\left\{-\frac{ \mathcal{E}^2\|(\bA\bA^T)^{-\frac{1}{2}}\bA(\bb_i-\bb_j)\|_2^2x}{8\sigma^2}\right\}p(x)\,  \mathrm{d}x\label{im11}\\*
&=\! M \left(1\! +\! \frac{\mathcal{E}^2\|(\bA\bA^T)^{-\frac{1}{2}}\bA(\bb_i-\bb_j)\|_2^2}{4\sigma^2}\right)^{-\frac{ T-M+N }{2}},\label{im1}
\end{align}
where~\eqref{im1} follows from the fact that the integral in~(\ref{im11}) is the moment generating function of a chi-squared random variable. If we choose the sensing matrix as a column-normalized group Hadamard matrix we introduced in Lemma~\ref{coherence}, because $\bA\bA^T=\frac{1}{\alpha}\bI$, we have $\lambda_1=\lambda_N=1/\alpha$. Thus, based on~\eqref{constraint3} and~\eqref{eqn:coherence}, {\color{black} when $M$ is a power of two,} we have
\begin{align*}
&P_{\rm{a},\mathrm{ML}}(\mathrm{err})\\*
&\quad\leq M \left(1+\frac{\alpha\mathrm{SNR}(1-\mu)}{2}\right)^{-\frac{ T-M+N }{2}}\\*
&\quad\leq M\Bigg(1+\frac{\alpha\mathrm{SNR}}{2}\bigg(1-\frac{\alpha}{1-1/M} \\*
&\qquad \times\bigg(\bigg(\frac{1}{\alpha}-\frac{1}{\alpha M}-1\bigg)\sqrt{\frac{1}{M}}+\frac{1}{\alpha M}\bigg)\bigg)\Bigg)^{-\frac{ T-M+N }{2}},
\end{align*}
where we remind the reader that $\mathrm{SNR}=\mathcal{E}^2/\sigma^2$. We can see that 
\begin{align}
\label{eqn:limit}
\lim_{M\to\infty} \frac{\alpha}{1-1/M}\left(\left(\frac{1}{\alpha}-\frac{1}{\alpha M}-1\right)\sqrt{\frac{1}{M}}+\frac{1}{\alpha M}\right)=0,
\end{align}
where, here and in the following, the limit as $M\to\infty$ is understood as being taken along the subsequence indexed by $M=2^i$ for $i\in\bbN$.
Thus,
\begin{align*}
\liminf_{M\rightarrow\infty} -\frac{1}{M} \log P_{\rm{a},\rm{ML}}(\mathrm{err})\geq \frac{\beta-1+\alpha}{2}\log\left(1+\frac{\alpha\mathrm{SNR}}{2}\right),
\end{align*}
as desired.
%
\end{proof} 

\begin{proof}[Proof of~\eqref{mld_cov}]\label{ml_cov}
	Let $\bH=[\bA\bG^{-1}\bA^T]^{-1/2}\bA$ and also let $\bh_i$ be the $i$-th column of $\bH$. Define the error events $\mathcal{E}_i=\{\|\tilde{\bu}-\mathbf{h}_i\|\leq\|\tilde{\bu}-\mathbf{h}_1\|\}$ for $1\le i\le M$. Then the error probability is 
	\begin{align*}
	P_{\rm{a},\mathrm{ML}}(\mathrm{err})=\mathbb{P}\Big(\bigcup_{i\neq 1}\mathcal{E}_i\Big).
	\end{align*}
	We also have
	\begin{align*}
	\mathbb{P}(\mathcal{E}_i)\leq\mathbb{P}\Big(\bigcup_{i\neq 1}\mathcal{E}_i\Big)\leq M \mathbb{P}(\mathcal{E}_i).
	\end{align*}
	We   obtained the upper bound of the error probability based on the union bound, i.e., the Gallager bound with $\rho=1$. Since $\mathbb{P}(\mathcal{E}_i)$ decays exponentially fast in $M$ (and $T$) and $\frac{1}{M}\log M\to0$, the lower bound and upper bound are exponentially tight, which means for lower bound on the error probability, we can just lower bound any one of the error probabilities, say $\mathbb{P}(\mathcal{E}_1)$ (by the symmetry of the events). 
	
	To lower bound $\mathbb{P}(\mathcal{E}_1)$, we first introduce a basic result concerning a lower bound on the complementary cumulative distribution function of the standard Gaussian $Q(x)=\int_x^{\infty} \frac{1}{\sqrt{2\pi}} \exp(-u^2/2)\,\mathrm{d}u$.
	For any $x\geq 0$ and any $\epsilon > 0$, the following inequality holds \cite{cote}:
	\begin{align}
	\label{lower_Q}
	Q(x)\geq\frac{e^{(\pi\epsilon+2)^{-1}}}{2(1+\epsilon)}\sqrt{\frac{\epsilon}{\pi}(\pi\epsilon+2)}\exp\left(-\frac{(1+\epsilon)x^2}{2}\right).
	\end{align}
	Define $c(\epsilon) =	\frac{e^{(\pi\epsilon+2)^{-1}}}{2(1+\epsilon)}\sqrt{\frac{\epsilon}{\pi}(\pi\epsilon+2)}$. We then obtain
	\begin{align*}
	\mathbb{P}(\mathcal{E}_1\mid \bG)&=2Q\left(\frac{\|\mathbf{h}_1-\mathbf{h}_i\|}{2\sigma}\right)\notag\\*
	&\geq c(\epsilon)\exp\left(-\frac{(1+\epsilon)d_{i1}^2}{8\sigma^2}\right),
	\end{align*}
	where $d_{i1}=\|\mathbf{h}_1-\mathbf{h}_i\|=\| (\bA\bG^{-1}\bA^T)^{-1/2} (\ba_j-\ba_1)\|$.
	According to Lemma \ref{wish}, $d_{i1}^2\sim \mathcal{E}^2\|(\bA\bA^T)^{-1/2}\bA(\bb_1-\bb_i)\|_2^2\chi_{T-M+N}^2$. Let $X$ be distributed as $\chi_{T-M+N}^2$ and have probability density function $p(x)$. Define $d_{ij}'=\|(\bA\bA^T)^{-1/2}\bA(\bb_i-\bb_j)\|_2^2 $.
	Then, we obtain
	\begin{align*}
	\mathbb{P}(\mathcal{E}_1)&\geq c(\epsilon) \int_{0}^{+\infty} \exp\left(-\frac{(\epsilon+1)\mathcal{E}^2(d_{1i}')^2 x}{8\sigma^2}\right) p(x)\, \rmd x\notag\\*
	&=c(\epsilon) \left(1+\frac{(\epsilon+1)\mathcal{E}^2(d_{1i}')^2 }{4\sigma^2}\right)^{-\frac{T-M+N}{2}}.
	\end{align*}

	We also choose the sensing matrix to be a column-normalized group Hadamard matrix we introduce in Lemma~\ref{coherence}. Similar to the analysis in the proof of~\eqref{mld_ach}, when $M$ is a power of two, we have
	\begin{align*}
	\mathbb{P}(\mathcal{E}_1)&\geq c(\epsilon) \left(1+\frac{(\epsilon+1)\alpha\mathrm{SNR}(1+\mu)}{2}\right)^{-\frac{T-M+N}{2}}\\*
	&\geq c(\epsilon) \Bigg(1+\frac{(\epsilon+1)\alpha\mathrm{SNR}}{2}\bigg(1+\frac{\alpha}{1-1/M} \\*
	&\quad \times\bigg(\left(\frac{1}{\alpha}-\frac{1}{\alpha M}-1\right)\sqrt{\frac{1}{M}}+\frac{1}{\alpha M}\bigg)\bigg)\Bigg)^{-\frac{(T-M+N)}{2}}.
	\end{align*}
	Note that~\eqref{eqn:limit} holds true and thus
	\begin{align*}
	\limsup_{M\rightarrow\infty}
	-\frac{1}{M}&\log\mathbb{P}(\mathcal{E}_1)\notag\\*
	&\leq \frac{\beta-1+\alpha} {2}\log\left(1+\frac{(\epsilon+1)\mathrm{SNR}\alpha }{2}\right).
	\end{align*}
	\noindent Since $\epsilon>0$ is arbitrary, we can let $\epsilon\rightarrow 0^{+}$ and obtain
	\begin{align*}
	\limsup_{M\rightarrow\infty}-\frac{1}{M}\log\mathbb{P}(\mathcal{E}_1)\leq \frac{\beta-1+\alpha} {2}\log\left(1+\frac{\alpha\mathrm{SNR}}{2}\right).
	\end{align*}
	Because $P_{\rm{a},\mathrm{ML}}(\mathrm{err})\geq \mathbb{P}(\mathcal{E}_1)$, we obtain
	\begin{align*}
	\limsup_{M\rightarrow\infty}-\frac{1}{M}\log P_{\rm{a},\mathrm{ML}}(\mathrm{err})\leq  \frac{\beta-1+\alpha} {2}\log\left(1+\frac{\alpha\mathrm{SNR}}{2}\right),
	\end{align*}
	which completes the proof.
\end{proof}

\subsection{Proof of Theorem~\ref{compressed_rdd}}
To prove Theorem~\ref{compressed_rdd}, we also need to prove that~\eqref{rdd_ach} and~\eqref{rdd_cov} succeed. 

\begin{proof}[Proof of~\eqref{rdd_ach}]\label{proof: rdd_ach}
	Conditioned on $\bG$, the error probability of the modified RDD is 
	\begin{align*}
	P_{\rm{a},\mathrm{mRDD}}&(\mathrm{err}\mid\bG)\\*
	&=\mathbb{P}\left(\bigcup_{i\neq 1}\left\{\mathbf{a}_i^T\mathbf{u}>\mathbf{a}_1^T\mathbf{u}\right\}\mid\mathbf{G}\right)\notag\\*
	&\leq (M-1) \mathbb{P}\left(\mathbf{a}_i^T\mathbf{u}>\mathbf{a}_1^T\mathbf{u}\mid\mathbf{G}\right)\notag\\*
	&=(M-1)\mathbb{P}\left(\mathbf{a}_i^T\mathbf{a}_1+\mathbf{a}_i^T\mathbf{w}>1+\mathbf{a}_1^T\mathbf{w}\mid\mathbf{G}\right)\notag\\*
	&=(M-1)\mathbb{P}\left((\mathbf{a}_i-\mathbf{a}_1)^T\mathbf{w}>1-\mathbf{a}_i^T\mathbf{a}_1\mid\mathbf{G}\right)\notag\\*
	&=(M-1)Q\left(\frac{\tau}{\hat{\sigma}}\right),
	\end{align*}
	where $\tau=1-\mathbf{a}_i^T\mathbf{a}_1$ and $\hat{\sigma}^2=\sigma^2(\mathbf{a}_i-\mathbf{a}_1)^T\mathbf{A}\mathbf{G}^{-1}\mathbf{A}^T(\mathbf{a}_i-\mathbf{a}_1)$.  We introduce an upper bound of the $Q$ function. For any $x>0$, the following inequality holds~\cite{boundQ}:
	\begin{align}
	\label{upper-Q}
	Q(x)\leq \frac{1}{\sqrt{2\pi}x}\exp\left(-\frac{x^2}{2}\right).
	\end{align}

	Now we take the randomness of $\bG$  into consideration. Based on Lemma \ref{wish}, $[\sigma^2(\mathbf{a}_i-\mathbf{a}_1)^T\mathbf{A}\mathbf{G}^{-1}\mathbf{A}^T(\mathbf{a}_i-\mathbf{a}_1)]^{-1}\sim\mathrm{SNR}\cdot \|(\bA\bA^T)^{1/2}\bA(\bb_i-\bb_j)\|_2^{-2}\chi_{T-M+1}^2$. Again let $p(y)$ denote the probability density function of the chi-squared random variable $Y\sim\chi_{T-M+1}^2$.  We also choose the sensing matrix to be a column-normalized group Hadamard matrix as described in Section~\ref{sec:preli}. Because $\bA\bA^T=\bI/\alpha$, we have $(1-\mathbf{a}_i^T\mathbf{a}_1)^2/\|(\bA\bA^T)^{1/2}\bA(\bb_i-\bb_j)\|_2^2=\alpha(1-\ba_i^T\ba_1)/2$, so $\tau^2/\hat{\sigma}^2= Y\cdot \tau\alpha\mathrm{SNR}/2$. Then based on~\eqref{upper-Q}, when $M$ is a power of two, we have
	\begin{align}
	&P_{\rm{a},\mathrm{mRDD}}(\mathrm{err})\notag\\
	&\quad \leq M\int_{0}^{\infty}\sqrt{\frac{1}{\alpha\pi\tau\mathrm{SNR}y}}\exp\left(-\frac{\tau\alpha\mathrm{SNR}y}{4}\right)p(y) \,\rmd y\notag\\
	&\quad=M\int_{0}^{\infty}\sqrt{\frac{1}{\alpha\pi\tau\mathrm{SNR}y}}\exp\left(-\frac{\tau\alpha\mathrm{SNR}y}{4}\right)\notag\\*
	&\qquad\times\frac{y^{\frac{T-M+1}{2}-1}e^{-\frac{y}{2}}}{2^{\frac{T-M+1}{2}}\Gamma\left((T-M+1)/2\right)}p(y)\,\rmd y\notag\\
	&\quad=M\int_{0}^{\infty}\sqrt{\frac{1}{2\alpha\pi\tau\mathrm{SNR}}}\exp\left(-\frac{\tau\alpha\mathrm{SNR}y}{4}\right)\notag\\*
	&\qquad\times\frac{y^{\frac{T-M}{2}-1}e^{-\frac{y}{2}}}{2^{\frac{T-M}{2}}\Gamma\left((T-M)/2\right)}\frac{\Gamma((T-M)/2)}{\Gamma((T-M+1)/2)}\,\rmd y\label{imt1}\\
	&\quad=M\sqrt{\frac{1}{2\alpha\pi\tau\mathrm{SNR}}}\frac{\Gamma((T-M)/2)}{\Gamma((T-M+1)/2)}\notag\\
	&\qquad\times\left(1+\frac{\tau\alpha\mathrm{SNR}}{2}\right)^{-\frac{T-M}{2}}\label{imt11},
	\end{align}
	where in~\eqref{imt1}, we change the random variable $Y\sim\chi_{T-M+1}^2$ to the random variable with distribution $\chi_{T-M}^2$ and~\eqref{imt11} is because the integral   is the moment generating function of the chi-squared random variable.
	Finally, we have
	\begin{align*}
	&P_{\rm{a},\mathrm{mRDD}}(\mathrm{err})\\*
	&\quad\leq M\sqrt{\frac{1}{2\alpha\pi(1-\mu)\mathrm{SNR}}}\frac{\Gamma((T-M)/2)}{\Gamma((T-M+1)/2)}\notag\\
	&\qquad\times\left(1+\frac{(1-\mu)\alpha\mathrm{SNR}}{2}\right)^{-\frac{T-M}{2}}\\
	&\quad=\exp\left(-M\left(\frac{\beta-1}{2}\log\left(1+\frac{(1-\mu)\alpha\mathrm{SNR}}{2}\right)\right)\right.\\
	&\qquad\left.-\frac{\log f(M)}{M}\right),
	\end{align*}
	where $$f(M)=M\sqrt{\frac{1}{2\pi\alpha(1-\mu)\mathrm{SNR}}}\cdot \frac{\Gamma((T-M)/2)}{\Gamma((T-M+1)/2)}.$$ By using Gautschi's inequality~\cite{gautschi},
	\begin{align*}
		\sqrt{\frac{2}{M(\beta-1)}}\leq\frac{\Gamma((T-M)/2)}{\Gamma((T-M+1)/2)}\leq \sqrt{\frac{2}{M(\beta-1-1/M)}},
	\end{align*}
	and because $\mu\ge 0$ and $\mu$ is bounded away from $1$ for all $M$ sufficiently large when $\bA$ is a column-normalized group Hadamard matrix ($\mu$ is a sequence that vanishes as $M$ grows), we have 
	\begin{align*}
	\lim_{M\to\infty}\frac{1}{M}\log f(M)= 0. 
	\end{align*}
	Thus, using~\eqref{eqn:limit} and again the (vanishing) upper bound on $\mu$ in~\eqref{eqn:coherence}, 
	we obtain
	\begin{align*}
	\liminf_{M\rightarrow\infty} -\frac{1}{M} \log P_{\rm{a},\mathrm{mRDD}}(\mathrm{err})\geq \frac{\beta-1}{2}\log\left(1+\frac{\mathrm{SNR}\alpha}{2}\right),
	\end{align*}
	which completes the proof.
\end{proof}

\begin{proof}[Proof of~\eqref{rdd_cov}]
	Conditioned on $\bG$, the error probability of the modified RDD is 
	\begin{align*}
	P_{\rm{a},\mathrm{mRDD}}(\mathrm{err}\mid\bG)&=\mathbb{P}\left(\bigcup_{i\neq 1}\left\{\mathbf{a}_i^T\mathbf{u}>\mathbf{a}_1^T\mathbf{u}\right\}\,\bigg|\, \mathbf{G}\right)\notag\\
	&\geq \mathbb{P}\left(\mathbf{a}_i^T\mathbf{u}>\mathbf{a}_1^T\mathbf{u}\mid\mathbf{G}\right)\notag\\
	&=\mathbb{P}\left(\mathbf{a}_i^T\mathbf{a}_1+\mathbf{a}_i^T\mathbf{w}>1+\mathbf{a}_1^T\mathbf{w}\mid\mathbf{G}\right)\notag\\
	&=\mathbb{P}\left((\mathbf{a}_i-\mathbf{a}_1)^T\mathbf{w}>1-\mathbf{a}_i^T\mathbf{a}_1\mid\mathbf{G}\right)\notag\\
	&=Q\left(\frac{\tau}{\hat{\sigma}}\right),
	\end{align*}
	where $\tau=1-\mathbf{a}_i^T\mathbf{a}_1$ and $\hat{\sigma}^2=\sigma^2(\mathbf{a}_i-\mathbf{a}_1)^T\mathbf{A}\mathbf{G}^{-1}\mathbf{A}^T(\mathbf{a}_i-\mathbf{a}_1)$. Similar to the statements in previous subsection (Proof of~\eqref{rdd_ach}) and based on the lower bound of Q-function in~\eqref{lower_Q}, we have
	\begin{align*}
	P_{\rm{a},\mathrm{mRDD}}(\mathrm{err})&\geq \int_{0}^{\infty} c(\epsilon)\exp\left(-\frac{y(1+\epsilon)\alpha\mathrm{SNR}\tau}{2}\right) p(y)\, \rmd y\notag\\*
	&=c(\epsilon)\left(1+\frac{(1+\epsilon)\alpha\mathrm{SNR}\tau}{2}\right)^{-\frac{T-M}{2}}.
	\end{align*}
	When choosing the sensing matrix to be a column-normalized group Hadamard matrix and $M$ is a power of two, we have
	\begin{align}
	&P_{\rm{a},\mathrm{mRDD}}(\mathrm{err})\notag\\*
	&\quad\geq c(\epsilon)\left(1+\frac{(1+\epsilon)(1+\mu)\alpha\mathrm{SNR}}{2}\right)^{-\frac{T-M}{2}}\label{eqn:int1}\\
	&\quad\geq c(\epsilon)\Bigg(1+\frac{(1+\epsilon)\alpha\mathrm{SNR}}{2}\bigg(1+\frac{\alpha}{1-1/M}\notag\\
	&\qquad \times \bigg(\sqrt{\frac{1}{M}}\left(\frac{1}{\alpha}-\frac{1}{\alpha M}-1\right)+\frac{1}{\alpha M}\bigg)\bigg)\Bigg)^{-\frac{T-M}{2}},\label{eqn:int2}
	\end{align}
	where~\eqref{eqn:int1} is because $\tau\leq 1+\mu$ and~\eqref{eqn:int2} is based on  Lemma~\ref{coherence}.
	Note that~\eqref{eqn:limit} holds true and thus,
	\begin{align*}
	\limsup_{M\rightarrow\infty}
	-\frac{1}{M}\log &P_{\rm{a},\mathrm{mRDD}}(\mathrm{err})\notag\\*
	&\leq \frac{\beta-1} {2}\log\left(1+\frac{(\epsilon+1)\mathrm{SNR}\alpha }{2}\right).
	\end{align*}
	\noindent Since $\epsilon>0$ is arbitrary, we can let $\epsilon\rightarrow 0^{+}$ and obtain
	\begin{align*}
	\limsup_{M\rightarrow\infty}-\frac{1}{M}\log P_{\rm{a},\mathrm{mRDD}}(\mathrm{err})\leq \frac{\beta-1} {2}\log\left(1+\frac{\alpha\mathrm{SNR}}{2}\right),
	\end{align*}
	which completes the proof.
\end{proof}

\section{Conclusion}
\label{sec:conclusion}
In this paper, we constructed approximate sufficient statistics for the $M$-ary hypothesis testing (detection) problem. By using a column-normalized group Hadamard matrix as the sensing matrix and by analyzing the maximum likelihood detector and the modified Reduced Dimensionality Detector of Xie, Eldar, and Goldsmith~\cite{Xie2013}, we obtained two achievable error exponents. We showed that these exponents are ensemble-tight, in the sense that our analysis is tight on the exponential scale. A very pleasing observation that is gleaned from our analysis is that the derived error exponents increase linearly in the compression rate $\alpha$ when $\alpha$ is small, clearly delineating the tradeoff between compression rate and error probability performance. {\color{black}Another appealing conclusion that can be made is that the ML detector performs far better than the modified RDD~\cite{Xie2013} on the error exponent.} However, the former is arguably more difficult to implement and more computationally demanding in practice.

This work, while being a natural offshoot of the vast body of literature in compressed sensing and traditional detection theory, opens several avenues for further investigations. First, a general converse for the error exponent is lacking; the difficulty of this stems  from the complicated statistics of the noise upon processing by an {\em arbitrary} sensing matrix $\bA$. Second, in this work, we restricted ourselves to the regime in which the number of hypothesis $M$ is not larger than the number of observations $T$ (i.e., $M\leq T$). However, in vanilla channel coding, the number of messages scales exponentially in the blocklength. This is a regime of potential interest but we were not able to overcome some technical difficulties for this regime in this work, chiefly because we needed a result similar to Lemma~\ref{wish} for the case in which $M>T$, but in this case {\color{black}$\bG$ is almost surely  singular.} We leave this to future work.

\appendices

\section{Proof of Theorem~\ref{prop:base}} \label{app:thm1}
\subsection{Optimal Error Exponent}\label{base:mle}
 As we assume the prior probabilities for each signal are the same, the optimal error probability is obtained by the maximum likelihood detector. To prove the asymptotic optimality of the matched filter, we first derive the error exponent for the maximum likelihood detector. Denote $\bs_i=[s_i(1),s_i(2),\ldots,s_i(T)]^T$ and $\bS=[\bs_1,\bs_2,\ldots,\bs_M]\in\mathbb{R}^{T\times M}$, we can write~\eqref{eqn:receive} in matrix form as
\begin{align*}
\by=\bS\bb^T+\bz,
\end{align*}
where the observation vector $\by=[y(1),y(2),\ldots,y(T)]^T$ and the noise vector $\bz=[z(1),z(2),\ldots,z(T)]^T$.

Now we detect the true transmitted signal by using the maximum likelihood detector and the error probability is 
\begin{align*}
P^*(\mathrm{err})=\mathbb{P}\left(\argmax_{k\in\{1,2,\ldots,M \}}\bbP(\by\mid k)\neq 1\right).
\end{align*}
For the upper bound on error probability, following similar steps as in Appendix~\ref{app:gallger} with $\bH=\bS$, we have that for any $0\le \rho\le 1$,
\begin{align}
\label{eqn:base_error}
P^*&(\mathrm{err}\mid \bS)\notag\\*
&\leq (M-1)^{\rho} \exp \left\{-\left[\frac{d^2(\bs_i,\bs_1)-(1-\rho)\|\bs_i\|^2}{2\sigma^2(1+\rho)^2} \right]\right\}^{\rho},
\end{align}
where $d^2(\bs_i,\bs_1)=\|\bs_i-\bs_1\|^2$. We choose $\rho=1$ in the bound above. Roughly speaking,  this is asymptotically optimal because $M$ scales linearly with $T$.
When $\rho=1$, the exponent in the upper bound in~\eqref{eqn:base_error} simplifies to $d^2(\bs_i,\bs_1)$ and $d^2(\bs_i,\bs_1)\sim 2\mathcal{E}^2 \chi_{T}^2$. Let $p(\cdot)$ denote the probability density function of the chi-squared random variable $X\sim\chi_{T}^2$. We then obtain
\begin{align*}
P^*(\mathrm{err})&\leq \int_{0}^{\infty}(M-1) \exp \left\{-\left[\frac{2\mathcal{E}^2x}{8\sigma^2} \right]\right\}p(x)\,\rmd x\\*
&=(M-1)\left(1+\frac{\mathrm{SNR}}{2}\right)^{-\frac{T}{2}}.
\end{align*}
Since $\beta= T/M$, we have
\begin{align}
\label{eqn:mleach}
\liminf_{M\to\infty} -\frac{1}{M} \log P^*(\mathrm{err})\geq \frac{\beta}{2}\log\left(1+\frac{\mathrm{SNR}}{2}\right).
\end{align}

For the converse part, we also lower bound the error probability using one error event. Following the same steps in Subsection~\ref{ml_cov} (cf.\ the proof of~\eqref{mld_cov}), we obtain
\begin{align*}
P^*(\mathrm{err}\mid \bS)&=2Q\left(\frac{\|\mathbf{s}_1-\mathbf{s}_i\|}{2\sigma}\right)\notag\\*
&\geq c(\epsilon)\exp\left(-\frac{(1+\epsilon)d^2(\bs_i,\bs_1)}{8\sigma^2}\right).
\end{align*}
Then
\begin{align*}
P^*(\mathrm{err})&\geq \int_{0}^{\infty}c(\epsilon)\exp\left(-\frac{(1+\epsilon)x}{8\sigma^2}\right)p(x)\rmd x\\*
&=c(\epsilon)\left(1+\frac{(1+\epsilon)\mathrm{SNR}}{2}\right)^{-\frac{T}{2}}.
\end{align*}
Since $\beta= T/M$, we have
\begin{align*}
\limsup_{M\rightarrow\infty}-\frac{1}{M}\log P^*(\mathrm{err})\leq \frac{\beta}{2}\log\left(1+\frac{(1+\epsilon)\mathrm{SNR}}{2}\right).
\end{align*}
Since $\epsilon>0$ is arbitrary, we can let $\epsilon\rightarrow 0^{+}$ and obtain
\begin{align}
\label{eqn:mlecon}
\limsup_{M\rightarrow\infty}-\frac{1}{M}\log P^*(\mathrm{err})\leq \frac{\beta}{2}\log\left(1+\frac{\mathrm{SNR}}{2}\right).
\end{align}
 Thus, combining~\eqref{eqn:mleach} and~\eqref{eqn:mlecon} yields $$E^*(\beta,\mathrm{SNR})=\frac{\beta}{2}\log\left(1+\frac{\mathrm{SNR}}{2}\right).$$

\subsection{Matched Filter Error Exponent}
\label{app:asymptotic}
To prove the asymptotic optimality of matched filter, we mainly prove that the error exponent for matched filter is same as the optimal error exponent we derive in Subsection~\ref{base:mle}. We have
\begin{align*}
P_{\mathrm{MF}}(\text{err})&=\bbP\left(\argmax_{j\in\{1,\cdots,M\}} v_j \neq 1\right)\\
&=\mathbb{P}\left(\bigcup_{j\neq 1}\{v_j>v_1\}\right).
\end{align*}
Again we first condition on the signals $\{\bs_i\}_{i=1}^M$, which means that we treat $\bG$ as a deterministic matrix. Then
\begin{align*}
P_{\mathrm{MF}}(\mathrm{err}\mid \bG)&\leq (M-1)\mathbb{P}\left(v_j>v_1\mid\bG\right)\\*
&= (M-1)\mathbb{P}\left(\tilde{v}_j>\tilde{v}_1\mid\bG\right),
\end{align*}
where $\tilde{\bv}=\bG^{1/2}\bb+\bw_0$ is obtained by whitening $\bv$ by multiplying $\bG^{-1/2}$ in both sides of~\eqref{eqn:matched filter}. Following similar steps as in Appendix~\ref{base:mle}, we have
\begin{align*}
P_{\mathrm{MF}}(\mathrm{err}\mid \bG)\leq (M-1)\exp\left\{-\frac{\|\bG^{1/2}(\bb_i-\bb_j)\|^2}{8\sigma^2}\right\}.
\end{align*}
Then because $\|\bG^{1/2}(\bb_i-\bb_j)\|^2=(\bb_i-\bb_j)^T\bG(\bb_i-\bb_j)\sim 2\mathcal{E}^2\chi_T^2$, we obtain
\begin{align*}
P_{\mathrm{MF}}(\mathrm{err})\leq (M-1)\left(1+\frac{\mathrm{SNR}}{2}\right)^{-\frac{T}{2}}.
\end{align*}
So
\begin{align}
\label{eqn:ssuper}
\liminf_{M\rightarrow\infty} -\frac{1}{M} \log P_{\mathrm{MF}}(\mathrm{err})\geq \frac{\beta}{2}\log\left(1+\frac{\mathrm{SNR}}{2}\right).
\end{align}

For the lower bound, following the same steps in Subsection~\ref{ml_cov} (cf.\ the proof of~\eqref{mld_cov}), we obtain
\begin{align*}
P_{\mathrm{MF}}(\mathrm{err}\mid \bG)&=Q\left(\frac{\|\bG^{1/2}(\bb_i-\bb_j)\|}{2\sigma}\right)\\
&\geq c(\epsilon)\exp\left(-\frac{(1+\epsilon)\|\bG^{1/2}(\bb_i-\bb_j)\|}{8\sigma^2}\right).
\end{align*}
Then
\begin{align*}
P_{\mathrm{MF}}(\mathrm{err})\geq c(\epsilon)\left(1+\frac{(1+\epsilon)\mathrm{SNR}}{2}\right)^{-\frac{T}{2}}.
\end{align*}
Since $\epsilon>0$ is arbitrary, we can take the limit of the normalized logarithm and let $\epsilon\rightarrow 0^{+}$ to obtain
\begin{align}
\label{eqn:sslower}
\limsup_{M\rightarrow\infty}-\frac{1}{M}\log P_{\mathrm{MF}}(\mathrm{err})\leq \frac{\beta}{2}\log\left(1+\frac{\mathrm{SNR}}{2}\right).
\end{align}
Combined with~\eqref{eqn:ssuper} and~\eqref{eqn:sslower}, we have 
\begin{align*}
E_{\mathrm{MF}}(\beta,\mathrm{SNR})=E^*(\beta,\mathrm{SNR})=\frac{\beta}{2}\log\left(1+\frac{\mathrm{SNR}}{2}\right),
\end{align*}
which proves Theorem~\ref{prop:base}.

\section{Derivation of (\ref{gallager2})}\label{app:gallger}
Without loss of generality, we suppose $s_1(t)$ is the transmitted signal. Recall that $\bH=[\bA\bG^{-1}\bA^T]^{-1/2}\bA$  and the probability density function of $\tilde{\bu}$ is
\begin{align}
\label{den}
p_1(\tilde{\bu}\mid \bG)=\prod_{i=1}^{N}\frac{1}{\sqrt{2\pi \sigma^2}} e^{-(u_{i}-h_{1,i})^2/2\sigma^2},
\end{align}
where $h_{j,i}$ is the $i$-th element in the $j$-th column of the matrix $\bH$, which is denoted as $\bh_j=[\bA\bG^{-1}\bA^T]^{-1/2}\ba_j$.
In the following, for the sake of brevity, we use the shorthand notation $p_1(\tilde{\mathbf{u}} )$ to mean $p_1(\tilde{\bu}\mid \bG)$. Now we derive an  upper bound on $P_{1}(\mathrm{err})$ using  Gallager's technique to derive the random coding error exponent in channel coding~\cite{gallager}.
For any $\lambda\geq 0$, let
\begin{align*}
\tilde{\mathcal{R}}_1^c=\left\{\tilde{\bu}: \sum_{j\neq 1}\left[\frac{p_j(\tilde{\bu})}{ p_1(\tilde{\bu})}\right]^\lambda\geq 1\right\}.
\end{align*}
Thus, for any $0\le\rho\le1$, the error probability can be upper bounded as
\begin{align*}
P_{1}(\mathrm{err})&\leq \mathbb{P}(\tilde{\mathcal{R}}_1^c)\\*
& \leq \int_{\mathbb{R}^N} \left(\sum_{j\neq 1} \left[\frac{p_j(\tilde{\bu})}{ p_1(\tilde{\bu})} \right]^\lambda \right)^\rho p_1(\tilde{\bu}) \rmd\tilde{\bu}\\*
&= \int_{\mathbb{R}^N} [p_1(\tilde{\bu})]^{(1-\lambda\rho)} \left(\sum_{j\neq 1} [p_j(\tilde{\bu})]^\lambda \right)^\rho \rmd \tilde{\bu}.
\end{align*}
If we let $\lambda=\frac{1}{1+\rho}$,  we obtain 
\begin{align}
\label{gallager}
P_{1}(\mathrm{err})\leq \int_{\mathbb{R}^N}[p_1(\tilde{\bu})]^{\frac{1}{1+\rho}}\left( \sum_{j\neq 1} [p_j(\tilde{\bu})]^{\frac{1}{1+\rho}}\right)^{\rho} \rmd\tilde{\bu}
\end{align}
for any $0\le\rho\le 1$.

 Then we substitute the density function (\ref{den}) into (\ref{gallager}). We then obtain  the chain of inequalities leading to \eqref{eqn:chain} on the  top of this  page. 
 \begin{figure*}[t]
\begin{align}
P_{1}(\mathrm{err})
&\leq \int_{-\infty}^{+\infty} \ldots \int_{-\infty}^{+\infty} \left[ \prod_{i=1}^{n} \frac{1}{\sqrt{(2\pi \sigma^2)}} e^{-\frac{(u_i-h_{1,i})^2}{2\sigma^2}}\right]^{\frac{1}{1+\rho}} \left\{\sum_{j\neq 1} \left[\prod_{i=1}^{n} \frac{1}{\sqrt{2\pi \sigma^2}} e^{-\frac{(u_i-h_{j,i})^2}{2\sigma^2}} \right]^{\frac{1}{1+\rho}} \right\}^\rho \rmd\tilde{\bu}\notag\\
&=\int_{-\infty}^{+\infty} \ldots\int_{-\infty}^{+\infty} \left[\prod_{i=1}^n \frac{1}{\sqrt{2\pi\sigma^2}} e^{-\frac{(u_i^2-2u_i h_{1,i}+h_{1,i}^2)}{ 2\sigma^2(1+\rho) }}\right]  \left\{\sum_{j\neq 1}\left[\prod_{i=1}^n \frac{1}{\sqrt{2\pi\sigma^2}}e^{-\frac{(u_i^2-2u_i h_{j,i}+h_{j,i}^2)}{ 2\sigma^2(1+\rho )}} \right] \right\}^\rho \rmd\tilde{\bu}\notag\\
&=\int_{-\infty}^{+\infty} \ldots \int_{-\infty}^{+\infty} \prod_{i=1}^n \frac{1}{\sqrt{2\pi\sigma^2}}\exp\left\{-\frac{u_i^2-2u_i h_{1,i}+h_{1,i}^2}{2\sigma^2(1+\rho)} -\frac{u_i^2\rho}{2\sigma^2(1+\rho)} \right\} \left\{\sum_{j\neq 1}\prod_{i=1}^n\exp\left\{-\frac{(-2u_i h_{j,i}+h_{j,i}^2)}{2\sigma^2(1+\rho)}\right\}  \right\}^\rho \rmd\tilde{\bu}\notag\\
&= \exp\left\{-\frac{\|\bh_1\|^2\rho}{2\sigma^2(1+\rho)^2} \right\}\int_{-\infty}^{+\infty} \ldots \int_{-\infty}^{+\infty} \prod_{i=1}^n \frac{1}{\sqrt{2\pi\sigma^2}}  
 \exp \left\{-\frac{1}{2\sigma^2}\left(u_i-\frac{h_{1,i}}{1+\rho}\right)^2 \right\}  \notag\\
&\qquad\times \left\{\sum_{j\neq 1}\exp \left\{ -\frac{\|\bh_j\|^2}{2\sigma^2(1+\rho)}\right\} \notag
 \prod_{i=1}^n\exp\left\{ \frac{2u_i h_{1,i}}{2\sigma^2(1+\rho)}\right\} \right\}^\rho \rmd \tilde{\bu}\\
&\leq \exp\left\{-\frac{\|\bh_1\|^2\rho}{2\sigma^2(1+\rho)^2} \right\}  \left\{\sum_{j\neq 1}\exp\left\{-\frac{\|\bh_j\|^2}{2\sigma^2(1+\rho)}\right\}\int_{-\infty}^{+\infty} \ldots \int_{-\infty}^{+\infty}  \prod_{i=1}^n\frac{1}{\sqrt{2\sigma^2\pi}}\exp\left\{-\frac{1}{2\sigma^2}\left(u_i-\frac{h_{1,i}}{1+\rho}\right)^2 \right\} \right.  \notag\\
&\qquad\times\exp\left\{\frac{2u_i h_{j,i}}{2\sigma^2(1+\rho)} \right\} \rmd\tilde{\bu}\Bigg\}^\rho\notag \\
&=\exp\left\{-\frac{\|\bh_1\|^2\rho}{2\sigma^2(1+\rho)^2} \right\} \left\{\sum_{j\neq 1} \exp \left\{ -\frac{\|\bh_j\|^2}{2\sigma^2(1+\rho)}\right\} \exp\left(-\frac{1}{2\sigma^2} \left[\frac{\|\bh_1\|^2}{(1+\rho)^2}-\frac{\|\bh_1+\bh_j\|^2}{(1+\rho)^2} \right]\right)  \right\}^\rho\notag\\
&= \left(\sum_{j\neq 1} \exp\left\{-\frac{1}{2\sigma^2}\left[\frac{d^2(\bh_j,\bh_1)}{(1+\rho)^2} -\frac{(1-\rho)\|\bh_j\|^2}{(1+\rho)^2} \right]\right\} \right)^\rho\notag\\
&=(M-1)^{\rho} \exp \left\{-\left[\frac{d^2(\bh_j,\bh_1)-(1-\rho)\|\bh_j\|^2}{2\sigma^2(1+\rho)^2} \right]\right\}^{\rho}.\label{eqn:chain}
\end{align} \hrulefill
 \end{figure*}
Note that 
\begin{align*}
d^2(\bh_j,\bh_1)=\|[\bA\bG^{-1}\bA^T]^{-1/2}(\ba_j-\ba_1)\|^2
\end{align*}
and
\begin{align*}
\|\bh_j\|^2=\|[\bA\bG^{-1}\bA^T]^{-1/2}\ba_j\|^2.
\end{align*}
This completes the derivation of (\ref{gallager2}).

\bibliographystyle{IEEEtran}
\bibliography{ref}

\begin{thebibliography}{10}
\providecommand{\url}[1]{#1}
\csname url@samestyle\endcsname
\providecommand{\newblock}{\relax}
\providecommand{\bibinfo}[2]{#2}
\providecommand{\BIBentrySTDinterwordspacing}{\spaceskip=0pt\relax}
\providecommand{\BIBentryALTinterwordstretchfactor}{4}
\providecommand{\BIBentryALTinterwordspacing}{\spaceskip=\fontdimen2\font plus
\BIBentryALTinterwordstretchfactor\fontdimen3\font minus
  \fontdimen4\font\relax}
\providecommand{\BIBforeignlanguage}[2]{{%
\expandafter\ifx\csname l@#1\endcsname\relax
\typeout{** WARNING: IEEEtran.bst: No hyphenation pattern has been}%
\typeout{** loaded for the language `#1'. Using the pattern for}%
\typeout{** the default language instead.}%
\else
\language=\csname l@#1\endcsname
\fi
#2}}
\providecommand{\BIBdecl}{\relax}
\BIBdecl

\bibitem{vanTrees}
H.~L. Van~Trees, \emph{Detection, Estimation, and Modulation Theory, Part I:
  Detection, Estimation, and Linear Modulation Theory}.\hskip 1em plus 0.5em
  minus 0.4em\relax John Wiley \& Sons, 2004.

\bibitem{Xie2013}
Y.~Xie, Y.~C. Eldar, and A.~Goldsmith, ``Reduced-dimension multiuser
  detection,'' \emph{IEEE Transactions on Information Theory}, vol.~59, no.~6,
  pp. 3858--3874, 2013.

\bibitem{eldar}
Y.~C. Eldar, \emph{Sampling Theory: Beyond Bandlimited Systems}.\hskip 1em plus
  0.5em minus 0.4em\relax Cambridge University Press, 2015.

\bibitem{eldar2012compressed}
Y.~C. Eldar and G.~Kutyniok, \emph{Compressed Sensing: Theory and
  Applications}.\hskip 1em plus 0.5em minus 0.4em\relax Cambridge University
  Press, 2012.

\bibitem{wainwright}
M.~J. Wainwright, ``Information-theoretic limits on sparsity recovery in the
  high-dimensional and noisy setting,'' \emph{IEEE Transactions on Information
  Theory}, vol.~55, no.~12, pp. 5728--5741, 2009.

\bibitem{wainwright2}
W.~Wang, M.~J. Wainwright, and K.~Ramchandran, ``Information-theoretic limits
  on sparse signal recovery: Dense versus sparse measurement matrices,''
  \emph{IEEE Transactions on Information Theory}, vol.~56, no.~6, pp.
  2967--2979, 2010.

\bibitem{Reeves}
G.~Reeves and M.~Gastpar, ``The sampling rate-distortion tradeoff for sparsity
  pattern recovery in compressed sensing,'' \emph{IEEE Transactions on
  Information Theory}, vol.~58, no.~5, pp. 3065--3092, 2012.

\bibitem{tulino}
A.~M. Tulino, G.~Caire, S.~Verd\'u, and S.~Shamai, ``Support recovery with
  sparsely sampled free random matrices,'' \emph{IEEE Transactions on
  Information Theory}, vol.~59, no.~7, pp. 4243--4271, 2013.

\bibitem{Scarlett}
J.~Scarlett and V.~Cevher, ``Limits on support recovery with probabilistic
  models: An information-theoretic framework,'' \emph{IEEE Transactions on
  Information Theory}, vol.~63, no.~1, pp. 593--620, 2016.

\bibitem{hayashi}
M.~Hayashi and V.~Y.~F. Tan, ``Minimum rates of approximate sufficient
  statistics,'' \emph{IEEE Transactions on Information Theory}, vol.~64, no.~2,
  pp. 875--888, 2017.

\bibitem{Eldar2009}
Y.~C. Eldar, ``Compressed sensing of analog signals in shift-invariant
  spaces,'' \emph{IEEE Transactions on Signal Processing}, vol.~57, no.~8, pp.
  2986--2997, 2009.

\bibitem{thill}
M.~Thill and B.~Hassibi, ``Low-coherence frames from group fourier matrices,''
  \emph{IEEE Transactions on Information Theory}, vol.~63, no.~6, pp.
  3386--3404, 2017.

\bibitem{gallager}
R.~G. Gallager, \emph{Information Theory and Reliable Communication}.\hskip 1em
  plus 0.5em minus 0.4em\relax Springer, 1968, vol.~2.

\bibitem{Cov06}
T.~M. Cover and J.~A. Thomas, \emph{Elements of Information Theory},
  2nd~ed.\hskip 1em plus 0.5em minus 0.4em\relax Wiley-Interscience, 2006.

\bibitem{Horn}
R.~A. Horn and C.~R. Johnson, \emph{Matrix Analysis}.\hskip 1em plus 0.5em
  minus 0.4em\relax Cambridge University Press, 2012.

\bibitem{Foucart}
S.~Foucart and H.~Rauhut, ``A mathematical introduction to compressive
  sensing,'' \emph{Bull. Am. Math}, vol.~54, pp. 151--165, 2017.

\bibitem{coherence}
J.~A. Tropp, ``Just relax: Convex programming methods for identifying sparse
  signals in noise,'' \emph{IEEE Transactions on Information Theory}, vol.~52,
  no.~3, pp. 1030--1051, 2006.

\bibitem{wishart}
\BIBentryALTinterwordspacing
M.~L. Eaton, \emph{The Wishart Distribution (Chapter 8)}, ser. Lecture
  Notes--Monograph Series.\hskip 1em plus 0.5em minus 0.4em\relax Beachwood,
  Ohio, USA: Institute of Mathematical Statistics, 2007, vol.~53, pp. 302--333.
  [Online]. Available: \url{https://doi.org/10.1214/lnms/1196285114}
\BIBentrySTDinterwordspacing

\bibitem{finitefield}
R.~Lidl and H.~Niederreiter, \emph{Structure of Finite Fields}, 2nd~ed.\hskip
  1em plus 0.5em minus 0.4em\relax Cambridge University Press, 1994, p.
  44–75.

\bibitem{cote}
F.~D. C{\^o}t{\'e}, I.~N. Psaromiligkos, and W.~J. Gross, ``{A Chernoff-type
  lower bound for the Gaussian Q-function},'' \emph{arXiv preprint
  arXiv:1202.6483}, 2012.

\bibitem{boundQ}
P.~{Borjesson} and C.-E. {Sundberg}, ``Simple approximations of the error
  function {Q(x)} for communications applications,'' \emph{IEEE Transactions on
  Communications}, vol.~27, no.~3, pp. 639--643, 1979.

\bibitem{gautschi}
D.~W. Lozier, ``Nist digital library of mathematical functions,'' \emph{Annals
  of Mathematics and Artificial Intelligence}, vol.~38, no. 1-3, pp. 105--119,
  2003.

\end{thebibliography}

\end{document}